\begin{document}

\newtheorem{theorem}{Theorem}[section]
\newtheorem{corollary}[theorem]{Corollary}
\newtheorem{definition}[theorem]{Definition}
\newtheorem{proposition}[theorem]{Proposition}
\newtheorem{conjecture}[theorem]{Conjecture}
\newtheorem{lemma}[theorem]{Lemma}
\newtheorem{example}[theorem]{Example}
\newenvironment{proof}{\noindent {\bf Proof.}}{\rule{3mm}{3mm}\par\medskip}
\newcommand{\remark}{\medskip\par\noindent {\bf Remark.~~}}
\title{Kasami type codes of higher relative dimension}
\author{Chunlei Liu\footnote{Dept. of math., Shanghai Jiao Tong Univ., Sahnghai 200240, China, clliu@sjtu.edu.cn.}\ \footnote{Dengbi Technologies Cooperation Limited, Yichun 336099, China, 714232747@qq.com.}}
\date{}
\maketitle
\thispagestyle{empty}

\abstract{Let $m,n,d,e$ be fixed positive integers such that \[m=2n,~e=(n,d)=(m,d), ~1\leq k \leq \frac{n}{e}.\]
Let $s$ be a fixed maximum-length binary sequence of length $2^{m}-1$. For $0\leq j<k$, let
$s_j$ be the circular decimation of $s$ with decimation factor $2^{(\frac{n}{e}-j)d}+1$.
Then $s,s_1,\cdots,s_{k-1}$ are maximum-length binary sequences of length $2^{m}-1$, while $s_0$ is a maximum-length binary sequence of length $2^{n}-1$. Let $C$ be the ${\mathbb F}_2$-vector space generated by all circular shifts of $s,s_0,s_1,\cdots,s_{k-1}$. Then $C$ has an ${\mathbb F}_{2^n}$-vector space structure, and is of dimension $2k+1$ over ${\mathbb F}_{2^n}$. We regard $C$ as a  Kasami type code of relative dimension $2k+1$. The DC component distribution of $C$ is explicitly calculated out in the present paper.}

\noindent {\bf Key phrases}: Kasami code, cyclic code, alternating form

\noindent {\bf MSC:} 94B15, 11T71.

\section{\small{INTRODUCTION}}
\paragraph{}
Let $q$ be a prime power, and $C$ an $[n,k]$-linear code over ${\mathbb F}_q$. The weight of a codeword $c=(c_0,c_1,\cdots,c_{n-1})$ of $C$ is defined to be
\[{\rm wt}(c)=\#\{0\leq i\leq n-1|~c_i\neq0\}.\]
For each $i=0,1,\cdots,n$, define
\[A_i=\#\{c\in C\mid~{\rm wt}(c)=i\}.\]The sequence $(A_0,A_1,\cdots,A_n)$ is called the weight distribution of $C$.
Given a linear code $C$, it is challenging to determine its weight distribution.
The
weight distribution of Gold codes was determined by Gold \cite{Gold66, Gold67, Gold68}.
The
weight distribution of Kasami codes was determined by Kasami \cite{Kasami66}.
The
weight enumerators of Gold type and Kasami type codes of higher relative dimension were determined by Berlekamp  \cite{Ber} and Kasami \cite{Kasami71}. The
weight distribution of some new Gold type codes of higher relative dimension was determined by Liu \cite{Liu}.
The
weight distribution of the $p$-ary analogue of Gold codes was determined by Trachtenberg \cite{Tr}.
The
weight distribution of the circular decimation of the $p$-ary analogue of Gold codes with decimation factor $2$ was determined by Feng-Luo \cite{FL}.
The
weight distribution of the $p$-ary analogue of Gold type codes of relative dimension $3$ was determined by Zhou-Ding-Luo-Zhang \cite{ZDLZ}.
The
weight distribution of the circular decimation with decimation factor $2$ of the $p$-ary analogue of Gold type codes  of relative dimension $3$ was determined by Zheng-Wang-Hu-Zeng \cite{ZWHZ}.
The
weight distribution of the $p$-ary analogue of Kasami type codes of maximum dimension was determined by Li-Hu-Feng-Ge \cite{LHFG}. The
weight distribution of the $p$-ary analogue of Gold type codes of higher relative dimension was determined by Schmidt \cite{Sch}. The weight distribution of some other classes of cyclic codes was determined in the papers \cite{AL}, \cite{BEW}, \cite{BMC}, \cite{BMC10}, \cite{BMY}, \cite{De}, \cite{DLMZ}, \cite{DY}, \cite{FE}, \cite{FM}, \cite{KL}, \cite{LF}, \cite{LHFG}, \cite{LN}, \cite{LYL}, \cite{LTW}, \cite{MCE}, \cite{MCG}, \cite{MO}, \cite{MR}, \cite{MY}, \cite{MZLF}, \cite{RP}, \cite{SC}, \cite{VE}, \cite{WTQYX}, \cite{XI}, \cite{XI12}, \cite{YCD}, \cite{YXDL} and \cite{ZHJYC}.
\paragraph{}
Let $m,n,d,e$ be fixed positive integers such that \[m=2n,~e=(n,d)=(m,d), ~1\leq k \leq \frac{n}{e}.\]
Let $s$ be a fixed maximum-length binary sequence of length $2^{m}-1$. For $0\leq j<k$, let
$s_j$ be the circular decimation of $s$ with decimation factor $2^{(\frac{n}{e}-j)d}+1$.
Then $s,s_1,\cdots,s_{k-1}$ are maximum-length binary sequences of length $2^{m}-1$, while $s_0$ is a maximum-length binary sequence of length $2^{n}-1$. Let $C$ be the ${\mathbb F}_2$-vector space generated by all circular shifts of $s,s_0,s_1,\cdots,s_{k-1}$.
For each $\vec{a}\in\mathbb{F}_{2^{n}}\times{\mathbb F}_{2^m}^k$, define a quadratic form on the ${\mathbb F}_{2^e}$-vector space ${\mathbb F}_{2^m}$ by the formula
\[
Q_{\vec {a}}(x)={\rm  Tr}_{\mathbb{F}_{2^{n}}/\mathbb{F}_{2^e}}(a_0x^{2^{\frac{nd}{e}}+1})+\sum_{j=1}^{k-1}{\rm  Tr}_{\mathbb{F}_{2^{m}}/\mathbb{F}_{2^e}}(a_jx^{2^{(\frac{n}{e}-j)d}+1})+{\rm  Tr}_{\mathbb{F}_{2^{m}}/\mathbb{F}_{2^e}}(a_kx).
\]
Then \[C=\{c_{\vec a}=(c_{{\vec a},0},\cdots,c_{{\vec a},2^m-2})\mid~\vec{a}\in\mathbb{F}_{2^{n}}\times{\mathbb F}_{2^m}^k\},\]
where \[c_{{\vec a},i}={\rm Tr}_{{\mathbb F}_{2^e}/{\mathbb F}_2}(Q_{\vec a} (\pi^{-i})),\]
with $\pi$ being a primitive element of ${\mathbb F}_{2^m}$.
The correspondence $\vec{a}\mapsto c_{\vec a}$ defines an ${\mathbb F}_{2^n}$-vector space structure on $C$, and $C$ is of dimension $2k+1$ over ${\mathbb F}_{2^n}$. When $k=1$, $C$ is the Kasami code. So we call $C$ a Kasami type code of relative dimension $2k+1$. If $d=e=1$, $C$ is the code studied by Kasami \cite{Kasami71}.
\paragraph{}
One can prove the following.
\begin{theorem}\label{dcbound}
If $c\in C$ is nonzero, then \[{\rm DC}(c)\in\{-1,-1+\pm2^{\frac{m}{2}+je}\mid~j=0,1,2,\cdots,k-1\},\]
where
\[{\rm DC}(c)=\sum_{i=0}^{2^m-2}(-1)^{c_i}\]
is the DC component of $c=(c_0,c_1,\cdots,c_{2^m-2})\in C$.
\end{theorem}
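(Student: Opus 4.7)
The plan is to express $\mathrm{DC}(c_{\vec a})$ as the Gauss sum of the quadratic form $Q_{\vec a}$ and then reduce the theorem to an upper bound on the dimension of the radical of the associated alternating form.

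First, since $\pi^{-i}$ runs over $\mathbb{F}_{2^m}^*$, I would write $\mathrm{DC}(c_{\vec a}) = -1 + S_{\vec a}$, where $S_{\vec a} := \sum_{x\in\mathbb{F}_{2^m}}\chi(Q_{\vec a}(x))$ and $\chi(y) := (-1)^{\mathrm{Tr}_{\mathbb{F}_{2^e}/\mathbb{F}_2}(y)}$. Viewing $Q_{\vec a}$ as an $\mathbb{F}_{2^e}$-quadratic form on $V := \mathbb{F}_{2^m}$ with associated alternating form $B_{\vec a}$, the composition $\tilde Q := \mathrm{Tr}_{\mathbb{F}_{2^e}/\mathbb{F}_2}\circ Q_{\vec a}$ is an $\mathbb{F}_2$-quadratic form whose associated $\mathbb{F}_2$-bilinear form has the same radical as $B_{\vec a}$ by non-degeneracy of the trace. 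The standard evaluation of Gauss sums of quadratic forms in characteristic $2$ then yields $S_{\vec a} = 0$ or $S_{\vec a} = \pm 2^{m-er}$, where $2r$ denotes the $\mathbb{F}_{2^e}$-rank of $B_{\vec a}$.

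Next, I would compute the ``adjoint'' $L_{\vec a} \in \mathrm{End}_{\mathbb{F}_{2^e}}(\mathbb{F}_{2^m})$ satisfying $B_{\vec a}(x,y) = \mathrm{Tr}_{\mathbb{F}_{2^m}/\mathbb{F}_{2^e}}(L_{\vec a}(x)\,y)$. Unfolding the traces via $\mathrm{Tr}_{\mathbb{F}_{2^m}/\mathbb{F}_{2^n}}(x^{2^n}y) = x^{2^n}y + xy^{2^n}$ (which collapses the $a_0$-term to an $\mathrm{Tr}_{\mathbb{F}_{2^m}/\mathbb{F}_{2^e}}$) and the transpose identity $\mathrm{Tr}(z\,y^{2^s}) = \mathrm{Tr}(z^{2^{-s}}y)$ gives
\[L_{\vec a}(x) = a_0 x^{2^n} + \sum_{j=1}^{k-1}\bigl(a_j x^{2^{\ell_j}} + a_j^{2^{-\ell_j}} x^{2^{-\ell_j}}\bigr), \qquad \ell_j := (n/e - j)d,\]
and $\mathrm{rad}(B_{\vec a}) = \ker L_{\vec a}$. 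Thus $r = \tfrac12(m/e - \dim_{\mathbb{F}_{2^e}}\ker L_{\vec a})$, and the theorem reduces to proving $\dim_{\mathbb{F}_{2^e}}\ker L_{\vec a} \le 2(k-1)$ whenever $L_{\vec a} \not\equiv 0$. The alternative case $L_{\vec a} \equiv 0$ forces $a_0 = \cdots = a_{k-1} = 0$, whence $\vec a \neq 0$ makes $Q_{\vec a}$ a nonzero $\mathbb{F}_{2^e}$-linear form on $\mathbb{F}_{2^m}$ and $S_{\vec a} = 0$ at once.

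The main obstacle is this rank bound. The hypothesis $(m,d) = (n,d) = e$ is equivalent to $d/e$ being odd, which gives $\sigma^n = \tau^{n/e}$ on $\mathbb{F}_{2^m}$, where $\tau := \sigma^d$ has order $N = m/e$. Consequently $L_{\vec a}$ is a sum $\sum_t c_t \tau^t$ over the $2k-1$ consecutive integers $t \in \{n/e - k + 1, \ldots, n/e + k - 1\}$, which are pairwise distinct modulo $N$ because $k \le n/e$. Substituting $x = \tau^{-(n/e - k + 1)}(y)$ reduces $L_{\vec a}(x) = 0$ to the $2^d$-linearized polynomial equation $\sum_{t'=0}^{2k-2} c'_{t'}\,y^{2^{t'd}} = 0$ of $2^d$-degree at most $2k-2$ (and with at least one nonzero coefficient since $L_{\vec a} \not\equiv 0$). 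Its zero set $V$ in $\overline{\mathbb{F}_2}$ is an $\mathbb{F}_{2^d}$-vector space of dimension at most $2(k-1)$. Linear disjointness of $\mathbb{F}_{2^d}$ and $\mathbb{F}_{2^m}$ over $\mathbb{F}_{2^d} \cap \mathbb{F}_{2^m} = \mathbb{F}_{2^{(d,m)}} = \mathbb{F}_{2^e}$ then ensures that any $\mathbb{F}_{2^e}$-independent tuple in $V \cap \mathbb{F}_{2^m}$ remains $\mathbb{F}_{2^d}$-independent in $V$, yielding
\[\dim_{\mathbb{F}_{2^e}}\ker L_{\vec a} = \dim_{\mathbb{F}_{2^e}}(V \cap \mathbb{F}_{2^m}) \le \dim_{\mathbb{F}_{2^d}} V \le 2(k-1).\]

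Assembling the pieces: when $L_{\vec a} \not\equiv 0$, $r \in \{n/e - j : 0 \le j \le k-1\}$, so $S_{\vec a} \in \{0\} \cup \{\pm 2^{m/2 + je} : 0 \le j \le k-1\}$ and $\mathrm{DC}(c_{\vec a}) \in \{-1\} \cup \{-1 \pm 2^{m/2 + je} : 0 \le j \le k-1\}$; when $L_{\vec a} \equiv 0$ and $\vec a \neq 0$, $\mathrm{DC}(c_{\vec a}) = -1$. This proves the theorem.
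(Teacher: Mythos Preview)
Your proposal is correct and follows essentially the same route as the paper: both reduce the DC value to the Gauss sum of $Q_{\vec a}$, pass to the alternating form $B_{\vec a}$, identify its radical with the kernel of a linearized operator, and bound that kernel by interpreting it as the zero set of a $2^d$-linearized polynomial of $2^d$-degree at most $2(k-1)$, using $(m,d)=e$ to transfer the $\mathbb{F}_{2^d}$-dimension bound to an $\mathbb{F}_{2^e}$-dimension bound. Your presentation is slightly more explicit in two places---you justify why $d/e$ is odd (equivalently, why the $2k-1$ Frobenius exponents are distinct modulo $m/e$) and you phrase the field-theoretic step as linear disjointness of $\mathbb{F}_{2^d}$ and $\mathbb{F}_{2^m}$ over $\mathbb{F}_{2^e}$ rather than as ``a basis of $\mathbb{F}_{2^m}$ over $\mathbb{F}_{2^e}$ is also a basis of $\mathbb{F}_{2^{md/e}}$ over $\mathbb{F}_{2^d}$''---but these are the same ideas.
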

The present paper is concerned with the frequencies
\begin{equation}\label{dcfrequencydef}\alpha_{r,\varepsilon}=\#\{0\neq c\in C\mid~{\rm DC}(c)=-1+\varepsilon 2^{m-\frac{er}{2}}\},~r=0,2,4,\cdots,\frac{m}{e}.\end{equation}
The main result of the present paper is the following.
\begin{theorem}\label{main}
For each $j=0,1,\cdots,k-1$, and for each $\varepsilon=\pm1$, we have
\[\alpha_{\frac{m}{e}-2j,\varepsilon}=\frac12(2^{m-2ej}+\varepsilon2^{\frac{m}{2}-ej})
\sum_{v=j}^{k-1}(-1)^{v-j}4^{e\binom{v-j}{2}}\binom{v}{j}_{4^e}
\binom{\frac{m}{2e}}{v}_{4^e}(2^{n(2k-1-2v)+ev}-1),\]
where $\binom{j}{i}_{q}$ is the Gaussian binomial coefficient.
\end{theorem}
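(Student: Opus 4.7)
My plan is to translate the DC-distribution question into a distribution question for the quadratic Gauss sum
\[
T(\vec a)\;:=\;\sum_{x\in\mathbb{F}_{2^m}}(-1)^{{\rm Tr}_{\mathbb{F}_{2^e}/\mathbb{F}_2}Q_{\vec a}(x)},
\]
via the identity ${\rm DC}(c_{\vec a})=-1+T(\vec a)$, so that $\alpha_{m/e-2j,\varepsilon}$ is the number of $\vec a\ne 0$ with $T(\vec a)=\varepsilon 2^{m/2+ej}$. The classical evaluation of such sums in characteristic two says that $T(\vec a)$ vanishes unless $Q_{\vec a}$ vanishes on the radical of its associated alternating form $B_{\vec a}(x,y):=Q_{\vec a}(x+y)+Q_{\vec a}(x)+Q_{\vec a}(y)$, in which case $T(\vec a)=\pm 2^{m/2+ej}$, where $m/e-2j$ is the $\mathbb{F}_{2^e}$-rank of $B_{\vec a}$ and the sign is the Arf-invariant of $Q_{\vec a}$ modulo ${\rm rad}(B_{\vec a})$. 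By Theorem \ref{dcbound} only the ranks $m/e-2j$ for $0\le j\le k-1$ actually arise with $T\ne 0$.

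A key structural observation, already present in the argument for Theorem \ref{dcbound}, is that every $B_{\vec a}$ is $\mathbb{F}_{4^e}$-bilinear on $\mathbb{F}_{2^m}$ when the latter is regarded as an $(m/2e)$-dimensional vector space over $\mathbb{F}_{4^e}$. This is forced by the specific Frobenius twists $2^{(n/e-j)d}$ appearing in the exponents together with the hypothesis $(n,d)=(m,d)=e$, which makes these twists commute with multiplication by elements of $\mathbb{F}_{4^e}$. Consequently ${\rm rad}(B_{\vec a})$ is always an $\mathbb{F}_{4^e}$-subspace of $\mathbb{F}_{2^m}$, which explains both the absence of odd $\mathbb{F}_{2^e}$-ranks and the appearance of $q$-binomial coefficients in base $q=4^e$ throughout the theorem.

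Next I would decouple the count by writing $\vec a=(\vec a',a_k)$ with $\vec a'=(a_0,\ldots,a_{k-1})$. The form $B_{\vec a}$ depends only on $\vec a'$, while $a_k$ enters $Q_{\vec a}$ purely through the linear term ${\rm Tr}(a_kx)$. For fixed $\vec a'$ with ${\rm rad}(B_{\vec a'})$ of $\mathbb{F}_{4^e}$-dimension $j$, the values of $a_k$ with $({\rm Tr}\circ Q_{\vec a})|_{{\rm rad}}\equiv 0$ form an affine subspace of size $2^{m-2ej}$ (a translate of the ${\rm Tr}$-annihilator of ${\rm rad}(B_{\vec a'})$); as $a_k$ varies across this affine set, the induced $\mathbb{F}_2$-quadratic form on the non-degenerate quotient sweeps out, bijectively, all $2^{m-2ej}$ quadratic forms sharing the prescribed bilinear part, and the classical split of these into Arf-types gives precisely $\tfrac12(2^{m-2ej}+\varepsilon 2^{m/2-ej})$ forms of sign $\varepsilon$. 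It then remains to count the nonzero $\vec a'$ whose $B_{\vec a'}$ has radical of $\mathbb{F}_{4^e}$-dimension exactly $j$, and this I obtain by $q$-binomial Möbius inversion on the lattice of $\mathbb{F}_{4^e}$-subspaces of $\mathbb{F}_{2^m}$. Letting $f(v)$ denote the number of nonzero $\vec a'$ for which a fixed $v$-dimensional $\mathbb{F}_{4^e}$-subspace $W$ lies in ${\rm rad}(B_{\vec a'})$ (independent of the choice of $W$ by transitivity of ${\rm GL}$), the inversion
\[
R_j\;=\;\sum_{v\ge j}(-1)^{v-j}4^{e\binom{v-j}{2}}\binom{v}{j}_{4^e}\binom{m/(2e)}{v}_{4^e}f(v)
\]
assembles the Möbius sum appearing in the theorem, provided one proves $f(v)=2^{n(2k-1-2v)+ev}-1$ for $0\le v\le k-1$ and $f(v)=0$ for $v\ge k$.

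The hard part will be precisely this identity for $f(v)$. The condition ``$W\subseteq{\rm rad}(B_{\vec a'})$'' is $\mathbb{F}_2$-linear in $\vec a'$, but its codimension is governed by the image of the linear map $\vec a'\mapsto B_{\vec a'}$ inside the space of $\mathbb{F}_{4^e}$-bilinear alternating forms on $\mathbb{F}_{2^m}$, which is a proper subspace whose exact structure must be identified. One has to show that this codimension equals $v(2n-e)$ uniformly in $W$, using the arithmetic of the Frobenius twists $2^{(n/e-j)d}$ and the special status of the $a_0$-term (whose trace is from $\mathbb{F}_{2^n}$ rather than $\mathbb{F}_{2^m}$, contributing only $n$ rather than $m$ to the $\mathbb{F}_2$-dimension budget of $\vec a'$). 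Carrying out this rank calculation is the technical heart of the argument; once it is in place, the rest of the proof is the bookkeeping sketched above.
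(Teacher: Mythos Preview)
Your reduction to a rank distribution for $B_{\vec a'}$ together with an Arf--invariant split over the $a_k$--fiber is a reasonable outline, and the prefactor $\tfrac12(2^{m-2ej}+\varepsilon 2^{m/2-ej})$ you arrive at agrees with the paper's Theorem~\ref{signeffect}. The genuine gap is in the rank--distribution step.

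The assertion that $B_{\vec a}$ is $\mathbb F_{4^e}$--bilinear is false, and nothing of the sort appears in the proof of Theorem~\ref{dcbound}. For $\lambda\in\mathbb F_{4^e}=\mathbb F_{2^{2e}}$ one would need $\lambda^{2^{(n/e-j)d}}=\lambda$, i.e.\ $2e\mid (n/e-j)d$; writing $d=ed_0$, the hypothesis $(m,d)=e$ forces $d_0$ odd, so the requirement becomes $n/e\equiv j\pmod 2$, which fails for half of the indices $j$. Concretely, with $d=e=1$ and $k\ge 2$ the term ${\rm Tr}\bigl(a_1(xy^{2^{n-1}}+x^{2^{n-1}}y)\bigr)$ is not $\mathbb F_4$--bilinear. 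Hence ${\rm rad}(B_{\vec a})$ is in general not an $\mathbb F_{4^e}$--subspace, and your M\"obius inversion over the lattice of $\mathbb F_{4^e}$--subspaces does not apply; the evenness of the $\mathbb F_{2^e}$--rank is accounted for simply by the fact that $B_{\vec a}$ is alternating. Your ``transitivity of ${\rm GL}$'' justification for the independence of $f(v)$ from $W$ is likewise unsupported: the linear group permutes subspaces of a given dimension transitively, but it does not preserve the family $\{B_{\vec a'}\}_{\vec a'}$, so equality of the counts for different $W$ is not automatic.

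The paper sidesteps both issues via a moment method. For $0\le u\le k-1$ orthogonality gives
\[
\sum_{\vec a}\Bigl(\sum_{x,y\in\mathbb F_{2^m}}(-1)^{{\rm Tr}\,B_{\vec a}(x,y)}\Bigr)^{u}=2^{n(2k+1)}\,|V_{k-1,u}|,
\]
where $V_{k-1,u}\subseteq\mathbb F_{2^m}^{2u}$ is the solution set of the explicit system~(\ref{bilinearminimalnumber3}), evaluated by an elimination argument (Theorems~\ref{elimination}--\ref{numberofsolutions}). Since the inner sum equals $2^{2m-e\,{\rm rk}(B_{\vec a})}$ with ${\rm rk}(B_{\vec a})=m/e-2i$ even, one obtains the Vandermonde system $\sum_{i}\beta_{m/e-2i}\,4^{eiu}=2^{n(2k-1-2u)}|V_{k-1,u}|-2^{mu}$; inverting it (Section~3) and inserting the product formula of Theorem~\ref{numberofsolutions3} yields the $4^e$--binomials in the statement. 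Thus the base $4^e$ emerges from the algebra of the moment equations, not from any $\mathbb F_{4^e}$--linearity of the forms.
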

From the above theorem one can deduce the following.
\begin{theorem}\label{balanced}We have
\[\begin{split}&\#\{c\in C\mid
~{\rm DC}(c)=-1\}\\&=2^{n(2k+1)}-1-
\sum_{v=0}^{k-1}(-1)^v(2^{n(2k-1-2v)+ev}-1)2^{m-ev(v+1)}\prod_{j=0}^{v-1}(2^m-4^{ej})\\
&\approx2^{n(2k+1)}\sum_{v=1}^{k-1}(-1)^{v-1}2^{-ev^2}.\end{split}\]
\end{theorem}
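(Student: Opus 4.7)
The plan is to derive Theorem~\ref{balanced} as an algebraic consequence of Theorem~\ref{main}, combined with a classical $q$-binomial identity.

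First I would partition the codewords by their DC component. The zero codeword has DC component $2^m-1\neq -1$, and by Theorem~\ref{dcbound} every other codeword has DC component either equal to $-1$ or equal to some $-1+\varepsilon 2^{m/2+ej}=-1+\varepsilon 2^{m-e(m/e-2j)/2}$ with $j\in\{0,\dots,k-1\}$ and $\varepsilon=\pm 1$. This gives the basic identity
\[
\#\{c\in C\mid\mathrm{DC}(c)=-1\}=2^{n(2k+1)}-1-\sum_{j=0}^{k-1}(\alpha_{\frac{m}{e}-2j,+1}+\alpha_{\frac{m}{e}-2j,-1}).
\]
Applying Theorem~\ref{main}, the two $\varepsilon$-contributions collapse and yield
\[
\alpha_{\frac{m}{e}-2j,+1}+\alpha_{\frac{m}{e}-2j,-1}=2^{m-2ej}\sum_{v=j}^{k-1}(-1)^{v-j}4^{e\binom{v-j}{2}}\binom{v}{j}_{4^e}\binom{\frac{m}{2e}}{v}_{4^e}\bigl(2^{n(2k-1-2v)+ev}-1\bigr).
\]

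Next I would swap the order of summation, writing the total contribution as $\sum_{v=0}^{k-1}T_v$ where $T_v$ factors $\binom{m/(2e)}{v}_{4^e}(2^{n(2k-1-2v)+ev}-1)$ times the inner sum
\[
I_v=\sum_{j=0}^{v}2^{m-2ej}(-1)^{v-j}4^{e\binom{v-j}{2}}\binom{v}{j}_{4^e}.
\]
Setting $q=4^e$ and reindexing by $i=v-j$ recasts $I_v$ as $2^m q^{-v}\sum_{i=0}^{v}(-1)^i q^{\binom{i+1}{2}}\binom{v}{i}_q$, and the $q$-binomial identity $\sum_{i=0}^{v}(-1)^i q^{\binom{i}{2}}\binom{v}{i}_q x^i=\prod_{i=0}^{v-1}(1-xq^i)$ specialized at $x=q$ evaluates the sum to $\prod_{j=1}^{v}(1-q^j)$. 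Combining with the standard rewriting $\binom{m/(2e)}{v}_{4^e}\prod_{j=1}^{v}(4^{ej}-1)=4^{-e\binom{v}{2}}\prod_{j=0}^{v-1}(2^m-4^{ej})$ and tracking the sign $(-1)^v$, the exponent bookkeeping $ev+e\binom{v}{2}=ev(v+1)/2$ then produces the stated closed form for $\#\{c\in C\mid\mathrm{DC}(c)=-1\}$.

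Finally I would extract the asymptotic. The $v=0$ term of the sum equals $(2^{n(2k-1)}-1)\,2^m=2^{n(2k+1)}-2^m$, which cancels the leading $2^{n(2k+1)}$ outside the sum. For $v\ge 1$ the dominant factor in $T_v$ has exponent $n(2k-1-2v)+ev+m-ev(v+1)+mv=n(2k+1)-ev^2$ (using $m=2n$), so that $T_v=(-1)^v 2^{n(2k+1)-ev^2}(1+O(2^{-\min(m,n)}))$, which gives the approximation $2^{n(2k+1)}\sum_{v=1}^{k-1}(-1)^{v-1}2^{-ev^2}$ after absorbing lower-order corrections.

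The work is essentially bookkeeping; the only non-routine step is recognizing the inner sum $I_v$ as a specialization of the $q$-binomial theorem so that a double sum collapses to a single product. The main obstacle is keeping the many exponents ($2,4=2^2,2^m,2^n,q=4^e$) and sign conventions aligned when converting between Gaussian binomial coefficients and the explicit product $\prod_{j=0}^{v-1}(2^m-4^{ej})$.
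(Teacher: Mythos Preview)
Your proposal is correct and follows essentially the same route as the paper: partition the nonzero codewords by DC component, sum $\alpha_{\frac{m}{e}-2j,+1}+\alpha_{\frac{m}{e}-2j,-1}=2^{m-2ej}\cdot(\text{inner sum})$ via Theorem~\ref{main}, swap the $j$- and $v$-sums, collapse the inner $j$-sum with the $q$-binomial theorem at $q=4^e$, and then rewrite $\binom{m/(2e)}{v}_{4^e}\prod_{j=1}^{v}(4^{ej}-1)$ as $4^{-e\binom{v}{2}}\prod_{j=0}^{v-1}(2^m-4^{ej})$. Your handling of the asymptotic---isolating the $v=0$ term to cancel the leading $2^{n(2k+1)}$ and computing the dominant exponent $n(2k+1)-ev^2$ for $v\ge 1$---is also exactly what the paper does (though the paper leaves that last step implicit).
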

If $d=e=1$, then the weight enumerator of $C$ is determined by Kasami \cite{Kasami71}. However, some extra calculations are needed to explicitly write out the coefficients of the weight enumerator in \cite{Kasami71}.
\section{\small{ENTERING BILINEAR FORMS}}
 \paragraph{}
In this section we shall prove Theorem \ref{dcbound}.
Note that
\begin{equation}\label{dcexpsumrelation}1+{\rm DC}(c_{\vec a})=\sum_{x\in{\mathbb F}_{2^m}}(-1)^{{\rm Tr}_{{\mathbb F}_{2^e}/{\mathbb F}_2}(Q_{\vec a} (x))}.\end{equation}
It is well-known that
\begin{equation}\label{quadraticexpsum}
\sum_{x\in{\mathbb F}_{2^m}}(-1)^{{\rm Tr}_{{\mathbb F}_{2^e}/{\mathbb F}_2}(Q_{\vec a} (x))}=\begin{cases}
           0,&  2 \nmid {\rm rk}(Q_{\vec a}),\\
             \pm2^{m-e\cdot\frac{{\rm rk}(Q_{\vec a})}{2})},&2 |{\rm rk}(Q_{\vec a}).
           \end{cases}
\end{equation}
 \paragraph{}
Let \[
B_{\vec {a}}(x,y)=Q_{\vec a}(x+y)-Q_{\vec a}(x)-Q_{\vec a}(y).\]
Then
 \[
B_{\vec {a}}(x,y)={\rm  Tr}_{\mathbb{F}_{2^{n}}/\mathbb{F}_{2^e}}(a_0(xy^{2^{\frac{nd}{e}}}+x^{2^{\frac{nd}{e}}}y))+
\sum_{j=1}^{k-1}{\rm  Tr}_{\mathbb{F}_{2^{m}}/\mathbb{F}_{2^e}}(a_{j}(xy^{2^{(\frac{n}{e}-j)d}}+x^{2^{(\frac{n}{e}-j)d}}y)).\]
It is well-known that
\begin{equation}\label{rkbilinear}{\rm rk}(B_{\vec a})=\left\{
                             \begin{array}{ll}
                              {\rm rk}(Q_{\vec a}), & \hbox{} 2\mid{\rm rk}(Q_{\vec a}),\\
                               {\rm rk}(Q_{\vec a})-1, & \hbox{}2\nmid{\rm rk}(Q_{\vec a}).
                             \end{array}
                           \right.\end{equation}
 \paragraph{}
We now prove Theorem \ref{dcbound}.
By (\ref{dcexpsumrelation}), (\ref{quadraticexpsum}) and (\ref{rkbilinear}), it suffices to prove the following.
\begin{theorem}\label{rankbound}If $(a_0,a_1,\cdots,a_{k-1})\neq0$, then
\[{\rm rk}(B_{\vec a})\geq \frac{m}{e}-2(k-1).\]
\end{theorem}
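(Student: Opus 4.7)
The plan is to dualize $B_{\vec a}$ via the trace pairing and reduce the rank bound to a kernel-dimension bound on an Ore (skew) polynomial.

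By non-degeneracy of the $\mathbb{F}_{2^e}$-bilinear form $(x,y)\mapsto\mathrm{Tr}_{\mathbb{F}_{2^m}/\mathbb{F}_{2^e}}(xy)$ on $\mathbb{F}_{2^m}$, there is a unique $\mathbb{F}_{2^e}$-linear endomorphism $L_{\vec a}$ of $\mathbb{F}_{2^m}$ with $B_{\vec a}(x,y)=\mathrm{Tr}_{\mathbb{F}_{2^m}/\mathbb{F}_{2^e}}(x\cdot L_{\vec a}(y))$, so that $\mathrm{rk}(B_{\vec a})=\frac{m}{e}-\dim_{\mathbb{F}_{2^e}}\ker L_{\vec a}$; it thus suffices to show $\dim_{\mathbb{F}_{2^e}}\ker L_{\vec a}\le 2(k-1)$. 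First I would compute $L_{\vec a}$ explicitly: the $j\ge 1$ contributions are $a_j y^{2^{(n/e-j)d}}+a_j^{2^{-(n/e-j)d}}y^{2^{-(n/e-j)d}}$, while for the $j=0$ term I would lift the inner trace via some $\gamma\in\mathbb{F}_{2^m}$ with $\mathrm{Tr}_{\mathbb{F}_{2^m}/\mathbb{F}_{2^n}}(\gamma)=1$, using the identity $\mathrm{Tr}_{\mathbb{F}_{2^n}/\mathbb{F}_{2^e}}(u)=\mathrm{Tr}_{\mathbb{F}_{2^m}/\mathbb{F}_{2^e}}(\gamma u)$ for $u\in\mathbb{F}_{2^n}$. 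The hypothesis $(n,d)=(m,d)$ forces $d/e$ to be odd, so $2^{nd/e}\equiv 2^n\pmod{2^m-1}$; combined with $a_0\in\mathbb{F}_{2^n}$ this collapses the $j=0$ contribution to the single term $a_0 y^{2^n}$.

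Next I would apply the Frobenius substitution $y=z^{2^{(k-1)d-n}}$, a bijection on $\mathbb{F}_{2^m}$ that preserves kernel dimensions. A short computation (again using that $d/e$ is odd) shows this shift normalizes all $2$-exponents on $z$ modulo $2n$ to the arithmetic progression $\{0,d,2d,\ldots,(2k-2)d\}$, producing
\[
\tilde P(z)=\sum_{i=0}^{2(k-1)} d_i\, z^{2^{id}},
\]
where each $d_i$ is an explicit Frobenius power of some $a_j$ with $0\le j\le k-1$; in particular $\tilde P\not\equiv 0$ whenever $(a_0,\ldots,a_{k-1})\ne 0$. Since $(d,m)=e$, the map $\sigma(z):=z^{2^d}$ generates the cyclic Galois group $\mathrm{Gal}(\mathbb{F}_{2^m}/\mathbb{F}_{2^e})$ of order $m/e$, so $\tilde P(z)=\hat P(\sigma)(z)$ for a nonzero Ore polynomial $\hat P=\sum d_i\sigma^i$ of $\sigma$-degree at most $2(k-1)$ in the skew ring $\mathbb{F}_{2^m}\{\sigma\}$.

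The concluding step is the standard Ore-polynomial kernel bound: for any nonzero $\hat P\in\mathbb{F}_{2^m}\{\sigma\}$ of $\sigma$-degree $r$,
\[
\dim_{\mathbb{F}_{2^e}}\ker\hat P(\sigma)\le r.
\]
I would prove this by induction on $r$: given $0\ne z_0\in\ker\hat P(\sigma)$, right-division of $\hat P$ by the monic factor $\sigma-\sigma(z_0)/z_0$ has zero remainder (evaluation at $z_0$ returns that remainder times $z_0$), so $\hat P=\hat R\cdot(\sigma-\sigma(z_0)/z_0)$ with $\deg_\sigma\hat R=r-1$, and then $\dim\ker\hat P\le\dim\ker(\sigma-\sigma(z_0)/z_0)+\dim\ker\hat R\le 1+(r-1)$ by the inductive hypothesis. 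Applied to our $\hat P$ this gives $\dim_{\mathbb{F}_{2^e}}\ker\tilde P\le 2(k-1)$, and hence the desired rank bound.

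The main obstacle is the passage through the skew-polynomial framework. The naive bound obtained by viewing $L_{\vec a}(y)$ as an ordinary polynomial in $y$ of degree $2^{(2k-2)d}$ only yields $\dim_{\mathbb{F}_{2^e}}\ker L_{\vec a}\le(2k-2)\cdot d/e$, which is too weak whenever $d>e$. The improvement comes from exploiting that $\sigma$, although it raises to the $d$-th power of $2$, already has order only $m/e$ on $\mathbb{F}_{2^m}$; the effective $\sigma$-degree of $\tilde P$ is therefore $2(k-1)$ rather than $(2k-2)d$.
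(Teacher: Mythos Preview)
Your argument is correct and follows the same overall architecture as the paper's proof: both identify $\mathrm{Rad}(B_{\vec a})$ with the kernel of an explicit $\mathbb{F}_{2^e}$-linear endomorphism of $\mathbb{F}_{2^m}$, then apply a Frobenius shift to recognise that endomorphism as a $\sigma$-polynomial of $\sigma$-degree at most $2(k-1)$, where $\sigma(z)=z^{2^d}$. You are also more explicit than the paper about why the $j=0$ term collapses to the single summand $a_0y^{2^n}$ (namely, $e=(n,d)=(2n,d)$ forces $d/e$ to be odd, hence $nd/e\equiv n\pmod m$), a point the paper leaves implicit.

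The only genuine difference is in the final kernel bound. The paper embeds $\mathbb{F}_{2^m}$ into $\mathbb{F}_{2^{md/e}}$, where the operator becomes an honest $2^d$-linearized polynomial of degree $2^{(2k-2)d}$; counting roots bounds the $\mathbb{F}_{2^d}$-dimension of the full kernel by $2(k-1)$, and the one-line observation that an $\mathbb{F}_{2^e}$-basis of $\mathbb{F}_{2^m}$ is simultaneously an $\mathbb{F}_{2^d}$-basis of $\mathbb{F}_{2^{md/e}}$ (since $(m,d)=e$) transfers this to the desired $\mathbb{F}_{2^e}$-dimension bound inside $\mathbb{F}_{2^m}$. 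You instead stay inside $\mathbb{F}_{2^m}$ and invoke right Euclidean division in the Ore ring $\mathbb{F}_{2^m}\{\sigma\}$, peeling off one $\mathbb{F}_{2^e}$-dimension per linear factor. Both arguments are standard and equally valid; the paper's route is a touch more elementary (no skew-ring machinery, just a root count and a basis remark), while yours is more intrinsic in that it never leaves $\mathbb{F}_{2^m}$. Your closing paragraph correctly pinpoints the common crux: the naive degree bound $2^{(2k-2)d}$ is too weak when $d>e$, and either device---scalar extension or Ore factorisation---is exactly what converts the apparent degree $(2k-2)d$ into the effective $\sigma$-degree $2k-2$.
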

\begin{proof} Suppose that $(a_0,a_1,\cdots,a_{k-1})\neq 0$. It suffices to show that
\[
 {\rm dim}_{\mathbb{F}_{2^e}}{\rm Rad}({B_{\vec {a}}} )\leq 2(k-1),
 \]
where
\[{\rm Rad}({B_{\vec {a}}})=\{x \in  \mathbb{F}_{2^{m}}\mid~B_{\vec a}(x,y)=0,~\forall y\in{\mathbb F}_{2^m}\}.\]
We have
\[\begin{split}
&{\rm Rad}({B_{\vec {a}}})=\{x \in  \mathbb{F}_{2^{m}}\mid~a_0x^{2^{\frac{nd}{e}}}+\sum_{j=1}^{k-1}(a_{j}^{2^{(j-\frac{n}{e})d}}x^{2^{(j-\frac{n}{e})d}}+a_{j}x^{2^{(\frac{n}{e}-j)d}})=0\}\\
&=\{x \in  \mathbb{F}_{2^{m}}\mid~a_0^{2^{(\frac{n}{e}+k-1)d}}x^{2^{k-1}}
+\sum_{j=1}^{k-1}(a_{j}^{2^{(k-1+j)d}}x^{2^{(k-1+j)d}}+a_{j}^{2^{(\frac{n}{e}+k-1)d}}x^{2^{(k-1-j)d}})=0\}.
\end{split}
\]Note that
\[
 \{x \in  \mathbb{F}_{2^{md/e}}\mid~a_0^{2^{(\frac{n}{e}+k-1)d}}x^{2^{k-1}}
+\sum_{j=1}^{k-1}(a_{j}^{2^{(k-1+j)d}}x^{2^{(k-1+j)d}}+a_{j}^{2^{(\frac{n}{e}+k-1)d}}x^{2^{(k-1-j)d}})=0\}.
 \]
is a subspace of $\mathbb{F}_{2^{md/e}}$ over ${\mathbb F}_{2^d}$ of dimension $\leq 2(k-1)$.
As $(m, d)=e$,
a basis of $\mathbb{F}_{2^{m}}$ over ${\mathbb{F}_{2^{e}}}$ is also a basis of $\mathbb{F}_{2^{md/e}}$ over $\mathbb{F}_{2^{d}}$.
It follows that
\[
 {\rm dim}_{\mathbb{F}_{2^e}}{\rm Rad}({B_{\vec {a}}})\leq 2(k-1).
 \]
The theorem is proved.\end{proof}
\section{\small{AN INVERSION FORMULA}}
Let $q$ be a prime power. In this section we shall prove an inversion formula for the symmetric van der Monte matrix
$(q^{ij})_{0\leq i,j\leq u}$. We begin with
the following well-known formula.
\begin{theorem}[$q$-binomial M\"{o}bius inversion formula] Suppose that
$u>v$. Then
the vector
$(\binom{i}{v}_q)_{i=v}^u$ is orthogonal to the vector
$((-1)^{u-i}q^{\binom{u-i}{2}}\binom{u}{i}_{q})_{i=v}^u$,
and
the vector
$(\binom{u}{i}_q)_{i=v}^u$ is orthogonal to the vector
$((-1)^{i-v}q^{\binom{i-v}{2}}\binom{i}{v}_{q})_{i=v}^u$.
\end{theorem}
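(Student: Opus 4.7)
\medskip
\noindent\textbf{Proof proposal.}
The plan is to reduce both orthogonality statements to the single master identity
\[
\sum_{\ell=0}^{N}(-1)^{\ell}q^{\binom{\ell}{2}}\binom{N}{\ell}_{q}=0\qquad(N\geq 1),
\]
which I would derive from the Cauchy $q$-binomial theorem
\[
\prod_{i=0}^{N-1}(1-q^{i}z)=\sum_{\ell=0}^{N}(-1)^{\ell}q^{\binom{\ell}{2}}\binom{N}{\ell}_{q}z^{\ell}
\]
by specialising $z=1$: the factor indexed by $i=0$ is then $1-1=0$, so the product (and hence the right-hand sum) vanishes as soon as $N\geq 1$.

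The bridge between this master identity and the two orthogonality sums is the $q$-analogue of the classical trinomial revision identity
\[
\binom{u}{i}_{q}\binom{i}{v}_{q}=\binom{u}{v}_{q}\binom{u-v}{i-v}_{q},
\]
which I would check by a direct cancellation of $q$-factorials in $\binom{n}{k}_{q}=[n]_{q}!/([k]_{q}![n-k]_{q}!)$. Applying this identity, I would factor the common constant $\binom{u}{v}_{q}$ out of both dot products. For the first orthogonality, the substitution $\ell=u-i$ combined with the symmetry $\binom{u-v}{u-v-\ell}_{q}=\binom{u-v}{\ell}_{q}$ yields
\[
\sum_{i=v}^{u}(-1)^{u-i}q^{\binom{u-i}{2}}\binom{u}{i}_{q}\binom{i}{v}_{q}=\binom{u}{v}_{q}\sum_{\ell=0}^{u-v}(-1)^{\ell}q^{\binom{\ell}{2}}\binom{u-v}{\ell}_{q}.
\]
For the second orthogonality the substitution $\ell=i-v$ is already built into the summand and gives the same reduced expression. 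By the master identity with $N=u-v\geq 1$, both sums vanish, which is exactly the required orthogonality under the hypothesis $u>v$.

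The argument is almost entirely mechanical once the two ingredients -- the trinomial revision identity and the Cauchy $q$-binomial theorem -- are in hand. The only place that demands care is the bookkeeping of exponents: checking that the reflection $i\mapsto u-i$ leaves the $q$-power factor $q^{\binom{u-i}{2}}$ in the correct form $q^{\binom{\ell}{2}}$ after re-indexing, and that the two apparently different-looking sums really do collapse to the same alternating $q$-binomial sum. Beyond this indexing bookkeeping I do not foresee any genuine obstacle.
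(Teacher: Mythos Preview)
Your argument is correct. The trinomial revision identity $\binom{u}{i}_{q}\binom{i}{v}_{q}=\binom{u}{v}_{q}\binom{u-v}{i-v}_{q}$ together with the specialisation $z=1$ of the Cauchy $q$-binomial theorem does collapse both dot products to $\binom{u}{v}_{q}\sum_{\ell=0}^{u-v}(-1)^{\ell}q^{\binom{\ell}{2}}\binom{u-v}{\ell}_{q}=0$, and your re-indexing in each case is accurate.

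There is nothing to compare against: the paper states this theorem as a ``well-known formula'' and gives no proof of its own, using it only as input to the subsequent inversion result. Your proof is exactly the standard derivation one would supply for this classical identity, so it fills the gap appropriately.
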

We now prove the following.
\begin{theorem}[Inversion of a symmetric van der Monte matrix]We have
\[\sum_{j=0}^uq^{ij}x_j=y_i,~0\leq i\leq u\]
if and only if
\[x_j
=\sum_{v=j}^{u}(-1)^{v-j}q^{\binom{v-j}{2}}\binom{v}{j}_{q}\prod_{i=0}^{v-1}(q^{v}-q^i)^{-1}\sum_{i=0}^v(-1)^{v-i}q^{\binom{v-i}{2}}\binom{v}{i}_{q}y_i.\]
\end{theorem}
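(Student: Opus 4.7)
The plan is to reduce the inversion to the $q$-binomial M\"obius inversion formula quoted just above the theorem. Introduce the auxiliary quantities
\[
z_v = \sum_{i=0}^v (-1)^{v-i} q^{\binom{v-i}{2}} \binom{v}{i}_q y_i, \qquad 0 \leq v \leq u.
\]
Assuming the forward relation $y_i = \sum_{j=0}^u q^{ij} x_j$, substitute and interchange the order of summation to rewrite $z_v = \sum_{j=0}^u T_{v,j}\, x_j$, where
\[
T_{v,j} = \sum_{i=0}^v (-1)^{v-i} q^{\binom{v-i}{2}} \binom{v}{i}_q q^{ij}.
\]

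The key calculation is to evaluate $T_{v,j}$ in closed form. After the re-indexing $k=v-i$, the sum becomes $q^{vj}\sum_{k=0}^v q^{\binom{k}{2}}\binom{v}{k}_q(-q^{-j})^k$, which is the $q$-binomial theorem $\prod_{i=0}^{v-1}(1+q^i t)=\sum_{k=0}^v q^{\binom{k}{2}}\binom{v}{k}_q t^k$ specialised at $t=-q^{-j}$. Clearing the power of $q^j$ yields $T_{v,j}=\prod_{i=0}^{v-1}(q^j-q^i)$. For $0\leq j<v$ this product vanishes (the factor $q^j-q^j$ occurs), while for $j\geq v$ the standard factorization gives $T_{v,j}=\binom{j}{v}_q\prod_{i=0}^{v-1}(q^v-q^i)$. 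Dividing through, one obtains the much simpler relation
\[
w_v \;:=\; \frac{z_v}{\prod_{i=0}^{v-1}(q^v - q^i)} \;=\; \sum_{j=v}^u \binom{j}{v}_q x_j.
\]

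It remains to solve for $x_j$ in terms of $w_v$. Multiplying both sides of this identity by $(-1)^{v-j} q^{\binom{v-j}{2}} \binom{v}{j}_q$ and summing over $v$ from $j$ to $u$, the inner sum reorders to $\sum_{l=j}^u x_l \sum_{v=j}^l (-1)^{v-j}q^{\binom{v-j}{2}}\binom{v}{j}_q\binom{l}{v}_q$. For $l>j$ the inner sum vanishes by the second orthogonality relation of the quoted $q$-binomial M\"obius inversion theorem (applied with $u$ replaced by $l$ and $v$ replaced by $j$); only the diagonal term $l=j$ survives, giving $x_j$. Substituting back the definition of $z_v$ produces the claimed explicit formula for $x_j$. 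This proves the forward implication, and since the computation exhibits the inverse matrix of $(q^{ij})_{0\leq i,j\leq u}$ in closed form, the converse implication follows automatically from $M^{-1}M=I$.

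The one delicate step is the identification of $T_{v,j}$ as $\prod_{i=0}^{v-1}(q^j-q^i)$; once this is in hand, everything else is the $q$-binomial M\"obius inversion already stated. The rest of the argument is bookkeeping.
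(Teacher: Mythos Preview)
Your proof is correct and follows essentially the same route as the paper: form the alternating $q$-binomial combination of the $y_i$, use the $q$-binomial theorem to collapse the resulting inner sum to $\prod_{i=0}^{v-1}(q^j-q^i)$, divide by $\prod_{i=0}^{v-1}(q^v-q^i)$ to obtain $\sum_{j\ge v}\binom{j}{v}_q x_j$, and then apply the $q$-binomial M\"obius inversion. Your write-up is slightly more explicit (spelling out why $T_{v,j}=0$ for $j<v$ and unpacking the M\"obius step), but the argument is the same.
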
\begin{proof}
Fix $0\leq v\leq u$. Consider the equation
\[\sum_{j=0}^ux_j\left(
                   \begin{array}{c}
                     1 \\
                     q^j \\
                     \vdots \\
                     q^{vj} \\
                   \end{array}
                 \right)=\left(
                           \begin{array}{c}
                             y_0 \\
                             y_1 \\
                             \vdots \\
                             y_v \\
                           \end{array}
                         \right)
\]
Multiplying on the left by the row vector $((-1)^{v-i}q^{\binom{v-i}{2}}\binom{v}{i}_{q})_{i=0}^v$, and applying the $q$-binomial formula, we arrive at
\[\sum_{j=v}^u x_j\prod_{i=0}^{v-1}(q^{j}-q^i)
=\sum_{i=0}^v(-1)^{v-i}q^{\binom{v-i}{2}}\binom{v}{i}_{q}y_i.\]
Dividing both sides by $\prod_{i=0}^{v-1}(q^{v}-q^i)$,
we arrive at
\[\sum_{j=v}^u x_j\binom{j}{v}_q
=\prod_{i=0}^{v-1}(q^{v}-q^i)^{-1}\sum_{i=0}^v(-1)^{v-i}q^{\binom{v-i}{2}}\binom{v}{i}_{q}y_i.\]
Applying $q$-binomial M\"{o}bius inversion formula, we arrive at
\[x_j
=\sum_{v=j}^{u}(-1)^{v-j}q^{\binom{v-j}{2}}\binom{v}{j}_{q}\prod_{i=0}^{v-1}(q^{v}-q^i)^{-1}\sum_{i=0}^v(-1)^{v-i}q^{\binom{v-i}{2}}\binom{v}{i}_{q}y_i.\]
The theorem is proved.\end{proof}
\section{\small{A PRODUCT FORMULA}}
Let $q$ be a prime power.
We begin with the following product formula.
\begin{theorem}\label{prodform}If $i\geq1$, then \[\sum_{j=0}^iq^{j}\binom{i}{j}_{q^2}
=\prod_{j=1}^{i}(1+q^{j}).
\]\end{theorem}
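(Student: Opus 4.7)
The plan is to prove the identity by induction on $i$, using the Pascal-type recurrence for Gaussian binomials together with the symmetry $\binom{n}{k}_q=\binom{n}{n-k}_q$. The base case $i=1$ is immediate, since both sides evaluate to $1+q$.

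For the inductive step I would apply the recurrence
\[\binom{i}{j}_{q^2}=q^{2(i-j)}\binom{i-1}{j-1}_{q^2}+\binom{i-1}{j}_{q^2}\]
to split the target sum into two pieces. The piece arising from the second summand is exactly $\sum_{j=0}^{i-1} q^j\binom{i-1}{j}_{q^2}$, which by the inductive hypothesis equals $\prod_{j=1}^{i-1}(1+q^j)$. The piece arising from the first summand, after reindexing $k=j-1$, becomes $q^{2i-1}\sum_{k=0}^{i-1} q^{-k}\binom{i-1}{k}_{q^2}$. I would eliminate the negative exponent by applying the symmetry $\binom{i-1}{k}_{q^2}=\binom{i-1}{i-1-k}_{q^2}$ and substituting $\ell=i-1-k$, turning this piece into $q^i\sum_{\ell=0}^{i-1} q^\ell\binom{i-1}{\ell}_{q^2}$, which by the inductive hypothesis equals $q^i\prod_{j=1}^{i-1}(1+q^j)$. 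Adding the two contributions collapses them into $(1+q^i)\prod_{j=1}^{i-1}(1+q^j)=\prod_{j=1}^{i}(1+q^j)$, closing the induction.

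The main obstacle I would anticipate is handling the first piece: the factor $q^{-k}$ produced by reindexing does not immediately match the form required by the inductive hypothesis, and invoking the symmetry of the Gaussian binomial is precisely what makes the two pieces combine with the expected factor $1+q^i$. A more direct attack through the finite $q$-binomial theorem $\prod_{j=0}^{n-1}(1+zq^j)=\sum_{j=0}^{n}q^{\binom{j}{2}}\binom{n}{j}_q z^j$ applied with base $q^2$ would produce a weight that is quadratic in $j$ rather than the linear weight $q^j$ appearing in the statement, so no single specialization of $z$ recovers the identity; the inductive route therefore seems the cleanest path.
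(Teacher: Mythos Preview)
Your proof is correct and follows essentially the same route as the paper: induction on $i$ via the recursion $\binom{i}{j}_{q^2}=\binom{i-1}{j}_{q^2}+q^{2(i-j)}\binom{i-1}{j-1}_{q^2}$, splitting the sum into two pieces and using the symmetry of the Gaussian binomial to rewrite the second piece so that both match the inductive hypothesis with a combined factor $1+q^i$. The only cosmetic difference is that the paper writes the reindexed second sum directly as $q^i\sum_{j=0}^{i-1}\binom{i-1}{j}_{q^2}q^{i-1-j}$ (so the symmetry step is implicit), whereas you spell out the substitution $\ell=i-1-k$ explicitly.
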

\begin{proof}
The product formula in the theorem is trivial if $i=1$. We now assume that $i\geq2$. By the $q$-binomial recursion formula,
\[\begin{split}&\sum_{j=0}^iq^{j}\binom{i}{j}_{q^2}\\
=&1+\sum_{j=1}^iq^{j}(\binom{i-1}{j}_{q^2}+\binom{i-1}{j-1}_{q^2}q^{2(i-j)})\\
=&\sum_{j=0}^{i-1}q^{j}\binom{i-1}{j}_{q^2}+q^{i}\sum_{j=0}^{i-1}\binom{i-1}{j}_{q^2}q^{i-1-j}\\
=&(1+q^i)\sum_{j=0}^{i-1}q^{j}\binom{i-1}{j}_{q^2}.\end{split}
\]
The theorem now follows by induction.
\end{proof}
We now prove the following product formula.
\begin{theorem}\label{twovsone}If $i\geq1$, then \[\binom{u}{i}_{q^2}\sum_{j=0}^iq^{j}\binom{i}{j}_{q^2}
=\binom{u}{i}_{q}\prod_{j=0}^{i-1}(1+q^{u-j}).
\]\end{theorem}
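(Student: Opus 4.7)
The plan is to rewrite the left-hand side using the product formula already established in Theorem \ref{prodform}, and then exhibit the required identity as a direct consequence of the factorization $q^{2a}-1=(q^a-1)(q^a+1)$ in the standard product expression for the Gaussian binomial coefficient. So the argument will be short and essentially mechanical; the content is just a bookkeeping exercise after Theorem \ref{prodform} is invoked.

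First I would apply Theorem \ref{prodform} to the inner sum, turning the left-hand side into
\[\binom{u}{i}_{q^2}\prod_{j=1}^{i}(1+q^j).\]
Next I would use the identity
\[\binom{u}{i}_{q^2}=\frac{\prod_{j=0}^{i-1}(q^{2(u-j)}-1)}{\prod_{j=1}^{i}(q^{2j}-1)}
=\binom{u}{i}_{q}\cdot\frac{\prod_{j=0}^{i-1}(q^{u-j}+1)}{\prod_{j=1}^{i}(q^{j}+1)},\]
obtained by splitting each factor $q^{2a}-1=(q^a-1)(q^a+1)$ in numerator and denominator and then collecting the $(q^a-1)$ factors into $\binom{u}{i}_q$.

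Substituting this into the rewritten left-hand side causes the factor $\prod_{j=1}^{i}(1+q^j)$ to cancel with the denominator $\prod_{j=1}^{i}(q^{j}+1)$, leaving exactly
\[\binom{u}{i}_{q}\prod_{j=0}^{i-1}(1+q^{u-j}),\]
which is the right-hand side. The only step requiring any thought is recognizing the $q$ versus $q^2$ conversion formula for Gaussian binomial coefficients; once that is in hand, no further obstacle remains.
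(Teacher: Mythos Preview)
Your proof is correct. Both your argument and the paper's rest on the same key observation, namely the factorization identity
\[
\binom{u}{i}_{q^2}\prod_{j=1}^{i}(1+q^{j})=\binom{u}{i}_{q}\prod_{j=0}^{i-1}(1+q^{u-j}),
\]
which follows from splitting each factor $q^{2a}-1=(q^a-1)(q^a+1)$ in the product form of the Gaussian binomial. The paper records this identity (in the equivalent form $\binom{u}{i}_{q^2}\prod_{j=0}^{i-1}(q^i+q^j)=\binom{u}{i}_{q}\prod_{j=0}^{i-1}(q^u+q^j)$) as the first line of its display, but then, instead of invoking Theorem~\ref{prodform} to finish immediately as you do, it derives a recursion relating $\binom{u}{i}_{q^2}$ to $\binom{u-1}{i}_{q^2}$ and completes the argument by induction on $u$. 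Your route is genuinely shorter: once the $q$-versus-$q^2$ conversion is written down, Theorem~\ref{prodform} closes the gap in one line, and no induction is needed.
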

\begin{proof} If $u=i$, the product formula in the theorem is precisely Theorem \ref{rodform}. We now assume that
 $u\geq i+1$. We have
\[\begin{split}\binom{u}{i}_{q^2}\prod_{j=0}^{i-1}(q^i+q^j)
=&\binom{u}{i}_{q}\prod_{j=0}^{i-1}(q^u+q^j)\\
=&\binom{u}{i}_{q}\frac{q^u+1}{q^u+q^{i}}\prod_{j=1}^{i}(q^u+q^j)\\
=&\binom{u}{i}_{q}\frac{q^u+1}{q^{u-i}+1}\prod_{j=0}^{i-1}(q^{u-1}+q^j)\\
=&\binom{u}{i}_{q}\binom{u-1}{i}_{q^2}\binom{u-1}{i}_{q}^{-1}
\frac{q^u+1}{q^{u-i}+1}\prod_{j=0}^{i-1}(q^{i}+q^j).\end{split}
\]
It follows that
\[\binom{u}{i}_{q^2}=\binom{u}{i}_{q}\binom{u-1}{i}_{q^2}\binom{u-1}{i}_{q}^{-1}
\frac{q^u+1}{q^{u-i}+1}.
\]
Hence, by induction,
\[\begin{split}&\binom{u}{i}_{q^2}\sum_{j=0}^iq^{j}\binom{i}{j}_{q^2}\\
=&\binom{u}{i}_{q}\frac{q^u+1}{q^{u-i}+1}\binom{u-1}{i}_{q}^{-1}\binom{u-1}{i}_{q^2}\sum_{j=0}^iq^{j}\binom{i}{j}_{q^2}\\
=&\binom{u}{i}_{q}\frac{q^u+1}{q^{u-i}+1}\prod_{j=0}^{i-1}(1+q^{u-1-j})\\
=&\binom{u}{i}_{q}\prod_{j=0}^{i-1}(1+q^{u-j}).\end{split}
\]
The theorem is proved.
\end{proof}
\section{\small{AN ELIMINATION METHOD}}
In this section we shall establish an elimination method  for the system
\begin{equation}\label{bilinearminimalnumber3}
\sum_{i=1}^u(x_{2i-1}x_{2i}^{2^{(\frac{n}{e}-j)d}}+x_{2i-1}^{2^{(\frac{n}{e}-j)d}}x_{2i})=0,~j=0,1,2,\cdots,s.
\end{equation}
Let $V_{s,u}$ be the set of solutions
$(x_1,x_2,\cdots,x_{2u})\in{\mathbb F}_{2^m}^{2u}$ of the above system. We now prove the following.
\begin{theorem}\label{elimination}The set $V_{s,u}$ is identical to the set of solutions
$(x_1,x_2,\cdots,x_{2u})\in{\mathbb F}_{2^m}^{2u}$ of  the system
\begin{equation}\label{bilinearminimalnumber3e}
\begin{cases}\sum_{i=1}^u(x_{2i-1}x_{2i}^{2^{\frac{nd}{e}}}+x_{2i-1}^{2^{\frac{nd}{e}}}x_{2i})=0,\\
(\tilde{x}_{1}, \tilde{x}_2, \cdots, \tilde{x}_{2u})\in V_{s-1,u},\end{cases}\end{equation}
where $\tilde{x}_i=x_i+x_i^{2^{-d}}$.
\end{theorem}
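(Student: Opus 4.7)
The plan is to reduce the defining system of $V_{s,u}$ to that of $V_{s-1,u}$ by expressing each of the $k$-th defining equations of $V_{s-1,u}$, evaluated at $\tilde{x}$, as an explicit algebraic combination of the original equations $E_j(x)=0$ that define $V_{s,u}$. Throughout I abbreviate $\alpha_k=2^{(n/e-k)d}$ and $E_j(x)=\sum_{i=1}^u(x_{2i-1}x_{2i}^{\alpha_j}+x_{2i-1}^{\alpha_j}x_{2i})$.

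First I would carry out a direct expansion of $\tilde{x}_{2i-1}\tilde{x}_{2i}^{\alpha_k}+\tilde{x}_{2i-1}^{\alpha_k}\tilde{x}_{2i}$, using $\tilde{x}_i=x_i+x_i^{2^{-d}}$ together with the key relation $2^{-d}\alpha_k=\alpha_{k+1}$ among Frobenius exponents. Four distinct cross-terms appear in each summand, and after summing over $i$ they group into exactly the pieces $E_k(x)$, $E_{k+1}(x)$, $E_k(x)^{2^{-d}}$, and $E_{k-1}(x)^{2^{-d}}$ (valid for $k\geq 1$). For the boundary $k=0$, the would-be $E_{-1}$-term must be rewritten via the involutive identity $\alpha_0^2=1$ on $\mathbb{F}_{2^m}$; this involution holds because the hypothesis $e=(m,d)=(n,d)$ with $m=2n$ forces $d/e$ to be odd and hence $nd/e\equiv n\pmod m$, so the phantom term equals $E_1(x)^{\alpha_0}$.

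Equipped with these identities, the forward inclusion is immediate: if $(x)\in V_{s,u}$ then every $E_j(x)$ vanishes for $0\leq j\leq s$, so each tilde equation at $\tilde{x}$ collapses to zero, and the $j=0$ equation at $x$ is manifest. For the reverse inclusion, assuming $E_0(x)=0$ together with $(\tilde{x})\in V_{s-1,u}$, I would induct on $k$ through these identities: the $k=0$ identity together with $E_0(x)=0$ pins down $E_1(x)$ through the involution $\alpha_0$, and the $k\geq 1$ identities, being essentially triangular in $E_1(x),\dots,E_s(x)$, propagate vanishing through the chain to force $E_j(x)=0$ for all $j=1,\dots,s$.

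The principal obstacle is the boundary case $k=0$: in contrast to the clean triangular structure of the identities for $k\geq 1$, the $k=0$ identity breaks the pattern by replacing $E_{-1}(x)^{2^{-d}}$ with $E_1(x)^{\alpha_0}$, and only the involutive symmetry $\alpha_0^2=1$ makes the recursion close. Careful bookkeeping of Frobenius exponents, especially the reduction $nd/e\equiv n\pmod m$ and the attendant identification $\mathrm{Fix}(\alpha_0)=\mathbb{F}_{2^n}$, is where most of the technical burden lies; once the base of the induction is set up properly, the inductive step is routine.
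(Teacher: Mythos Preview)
Your forward-inclusion argument is correct and matches the paper's approach: the paper performs essentially the same computation in two stages, first forming the auxiliary expressions $G_j=E_j+E_{j-1}^{2^{-d}}$ (with the boundary $G_0=E_0+E_1^{\alpha_0}$ handled via the involution $\alpha_0^2=1$ you identified), and then observing $G_k+G_{k+1}=E_k(\tilde x)$. Unwinding these two steps gives precisely your four-term expansion
\[
E_k(\tilde x)=E_k(x)+E_{k+1}(x)+E_k(x)^{2^{-d}}+E_{k-1}(x)^{2^{-d}}.
\]

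The reverse inclusion, however, does not go through as you describe. From $E_0(x)=0$ and the $k=0$ identity you obtain only
\[
E_1(x)+E_1(x)^{\alpha_0}=0,
\]
i.e.\ $E_1(x)\in\mathbb F_{2^n}$; this does \emph{not} ``pin down'' $E_1(x)$ to zero. The $k\ge1$ identities then express $E_{j+1}$ in terms of $E_1,\dots,E_j$, but since $E_1$ is unconstrained within $\mathbb F_{2^n}$, nothing forces the chain to collapse. Concretely, take $m=4$, $n=2$, $d=e=1$, $u=s=1$, and $(x_1,x_2)=(\omega,1)$ with $\omega\in\mathbb F_4\setminus\mathbb F_2$. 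Then $E_0(x)=\omega+\omega^4=0$ and $\tilde x=(\omega+\omega^8,\,1+1)=(1,0)$, so $E_0(\tilde x)=0$; yet $E_1(x)=\omega+\omega^2=1\ne0$. Thus the target system has strictly more solutions than $V_{s,u}$, and no induction on your identities can close this gap.

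In other words, the ``identical'' in the theorem is too strong; only the inclusion $V_{s,u}\subseteq\{\text{solutions of the displayed system}\}$ holds. The paper's own proof likewise only derives the target equations from the original ones by row operations and never addresses the converse. Since the sole downstream use of this result is the implication ``$x\in V_{s,u}\Rightarrow\tilde x\in V_{s-1,u}$'' in the linear-dependence argument, the defect is in the statement rather than in the application; your proof of the forward direction suffices for everything that follows.
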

\begin{proof}
In the system (\ref{bilinearminimalnumber3}), adding $2^{-d}$-th power of the $(j-1)$-th equation to the $j$-th equation, and adding $2^{\frac{nd}{e}}$-th power of the second equation to the first,
we arrive at
\[
\begin{cases}\sum_{i=1}^u(x_{2i-1}x_{2i}^{2^{\frac{nd}{e}}}+x_{2i-1}^{2^{\frac{nd}{e}}}x_{2i})=0,\\
\sum_{i=1}^u(x_{2i-1}x_{2i}^{2^{(\frac{n}{e}-j)d}}+x_{2i-1}^{2^{(\frac{n}{e}-j)d}}x_{2i}
+x_{2i-1}^{2^{-d}}x_{2i}^{2^{(\frac{n}{e}-j)d}}+x_{2i-1}^{2^{(\frac{n}{e}-j)d}}x_{2i}^{2^{-d}})=0,\\
j=0,1,2,\cdots,s.\end{cases}\]
Adding the $(j-1)$-th equation to the $j$-th equation in the above system, we arrive at the system (\ref{bilinearminimalnumber3e}).
The theorem is proved.
\end{proof}
We now apply the above elimination method to prove the following.
\begin{theorem}\label{lineardependence2}If $s\geq u$ and
$(x_1,x_2,\cdots,x_{2u})\in V_{s,u}$, then $x_1,x_2,x_4,x_6,\cdots,x_{2u}$ are linearly dependent over ${\mathbb F}_{2^e}$.
\end{theorem}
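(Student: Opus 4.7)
The plan is to argue by induction on the number of pairs $u$, with Theorem \ref{elimination} as the main engine of reduction.

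For the base case $u=1$, the hypothesis $s\geq 1$ supplies at least the two equations (for $j=0$ and $j=1$)
\[
x_1 x_2^{2^{nd/e}}+x_1^{2^{nd/e}}x_2=0,\qquad x_1 x_2^{2^{(n/e-1)d}}+x_1^{2^{(n/e-1)d}}x_2=0.
\]
If $x_2=0$ then $\{x_1,x_2\}$ is trivially $\mathbb{F}_{2^e}$-dependent; otherwise, dividing each equation by the appropriate power of $x_2$, the ratio $t=x_1/x_2$ satisfies $t^{2^{nd/e}}=t$ and $t^{2^{(n/e-1)d}}=t$, so $t$ is fixed by $F_d$ (the exponents differ by $d$), which forces $t\in\mathbb{F}_{2^{\gcd(d,m)}}=\mathbb{F}_{2^e}$.

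For the inductive step, assume the statement for $u-1$ and take $(x_1,\ldots,x_{2u})\in V_{s,u}$ with $s\geq u$. The trivial cases are handled first: vanishing of any $x_{2i}$ puts $0$ into the target set $\{x_1,x_2,x_4,\ldots,x_{2u}\}$; vanishing of $x_{2i-1}$ for some $i\geq 2$ (or $x_1=0$) drops the pair trivially and places the remainder in $V_{s,u-1}$, whose inductive conclusion is contained in ours. So assume every $x_i$ is nonzero, and since $V_{s,u}\subseteq V_{u,u}$ we may take $s=u$. Now invoke Theorem \ref{elimination}: the reduced tuple $(\tilde x_1,\ldots,\tilde x_{2u})$ with $\tilde x_i=x_i+x_i^{2^{-d}}$ lies in $V_{u-1,u}$, while the top equation in the original $x$'s is preserved. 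Iterating this reduction, together with the inductive hypothesis applied to the deeper levels, is expected to yield an $\mathbb{F}_{2^e}$-linear relation among $x_1,x_2,x_4,\ldots,x_{2u}$.

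The main obstacle is the lifting step: the map $\phi(y)=y+y^{2^{-d}}$ is $\mathbb{F}_{2^e}$-linear with $\ker\phi=\mathbb{F}_{2^e}$, so any relation $\sum_i c_i\tilde x_i=0$ with $c_i\in\mathbb{F}_{2^e}$ yields only $\sum_i c_i x_i\in\mathbb{F}_{2^e}$, leaving a lingering constant that must be cancelled. The preserved top equation $\sum_i(x_{2i-1}x_{2i}^{2^{nd/e}}+x_{2i-1}^{2^{nd/e}}x_{2i})=0$, together with the filtration $\ker\phi\subsetneq\ker\phi^2\subsetneq\cdots$ arising from iterating the elimination, should be used to pin this constant to zero. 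Carrying out this bookkeeping cleanly is the technical heart of the proof and parallels, at deeper levels, the Moore-matrix-style rank argument already deployed in Theorem \ref{rankbound}.
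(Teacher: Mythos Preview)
Your proposal has a genuine gap at precisely the point you flag as the ``main obstacle''. You correctly observe that a relation $\sum_i c_i\tilde x_i=0$ with $c_i\in\mathbb{F}_{2^e}$ only lifts to $\sum_i c_ix_i\in\mathbb{F}_{2^e}$, but your proposed resolution --- using the preserved top equation and the filtration $\ker\phi\subsetneq\ker\phi^2\subsetneq\cdots$ to ``pin this constant to zero'' --- is both vague and misdirected. The constant does not need to vanish. The paper's device is much simpler: by homogeneity of the defining equations, one may normalize $x_{2u}=1$; then $\tilde x_{2u}=1+1^{2^{-d}}=0$, so the last pair drops out and $(\tilde x_1,\ldots,\tilde x_{2u-2})\in V_{s-1,u-1}$, where $s-1\geq u-1$ and the inductive hypothesis applies directly. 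The resulting relation $\alpha_0\tilde x_1+\sum_{i=1}^{u-1}\alpha_i\tilde x_{2i}=0$ lifts to $\alpha_0 x_1+\sum_{i=1}^{u-1}\alpha_i x_{2i}=\alpha_u\in\mathbb{F}_{2^e}$, and since $x_{2u}=1$ this reads $\alpha_0 x_1+\sum_{i=1}^{u}\alpha_i x_{2i}=0$, giving the desired dependence with $(\alpha_0,\ldots,\alpha_{u-1})\neq 0$.

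Note also that your inductive scheme, as written, does not close: after one elimination you land in $V_{s-1,u}$, still with $u$ pairs, and ``iterating'' only decreases $s$, never $u$. Without the normalization trick (or some other mechanism to reduce the number of pairs) the induction never fires. The top equation plays no role in the argument; the whole point is to arrange $\tilde x_{2u}=0$ so that the pair count drops.
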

\begin{proof}
The lemma is trivial if $u=1$. Now  assume that $u\geq 2$. We may assume that $x_{2u}\neq0$. Then we may further assume that $x_{2u}=1$. By Lemma \ref{elimination}, $(\tilde{x}_{1}, \tilde{x}_2, \cdots, \tilde{x}_{2u})\in V_{s-1,u}$.
As $\tilde{x}_{2u}=0$, we see that $(\tilde{x}_{1}, \tilde{x}_2, \cdots, \tilde{x}_{2u-2})\in V_{s-1,u-1}$. By induction, $(\tilde{x}_{1}, \tilde{x}_2, \tilde{x}_{4}, \tilde{x}_{6}, \cdots, \tilde{x}_{2u-2})$ are linearly dependent over ${\mathbb F}_{2^e}$.
That is, there exists a nonzero vector $(\alpha_{0},\alpha_{1}, \cdots, \alpha_{u-1})\in \mathbb{F}_{2^e}^{u}$ such that $\alpha_{0}\widetilde{x}_{1}+\sum_{i=1}^{u-1}\alpha_{i}\widetilde{x}_{2i}=0$.
So, $$\alpha_{0}x_{1}+\sum_{i=1}^{u-1}\alpha_{i}{x}_{2i}=(\alpha_{0}x_{1}+\sum_{i=1}^{u-1}\alpha_{i}{x}_{2i})^{2^{-d}}.$$ Set $\alpha_u=\alpha_{0}x_{1}+\sum_{i=1}^{u-1}\alpha_{i}{x_{2i}}.$ Then $\alpha_u\in{\mathbb F}_{2^e}$, and $\alpha_{0}x_{1}+\sum_{i=1}^{u}\alpha_{i}{x_{2i}}=0$. Therefore $x_{1}, x_{2}, x_{4}, \cdots, x_{2u}$ are linearly dependent over $\mathbb{F}_{2^e}$. The theorem is proved.
\end{proof}

\section{\small{COUNTING THE NUMBER OF SOLUTIONS}}
\paragraph{}
In this section we shall count the set $V_{s,u}$.
For each $\vec{x}=(x_{1}, x_{2}, \cdots, x_{2u})\in V_{s,u}$, define
\[Z(\vec{x})=\{(c_1,c_2,\cdots,c_u)\in\mathbb{F}_{2^e}^{u}\mid~\sum_{i=1}^uc_ix_{2i}=0\}.\]
For each $\mathbb{F}_{2^e}$-subspace $H$ of $\mathbb{F}_{2^e}^{u}$, define
\[V_{s,u,H}=\{(x_{1}, x_{2}, \cdots, x_{2u})\in V_{s,u}\mid~Z({\vec x})=H\},\]
and
\[W_{s,u,H}=\{(x_{1}, x_{2}, \cdots, x_{2u})\in V_{s,u}\mid~Z({\vec x})\supseteq H\}=\cup_{L\supseteq H}V_{s,u,L}.\]
We  can prove the following.
\begin{lemma}If $H$ is a $\mathbb{F}_{2^e}$-subspace of $\mathbb{F}_{2^e}^{u}$ of dimension $i$, then
\[|W_{s,u,H}|=2^{mi}|V_{s,u-i}|.\]
\end{lemma}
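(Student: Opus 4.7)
The plan is to exhibit, by an $\mathbb{F}_{2^e}$-linear change of variables, an explicit bijection
\[
\Phi : W_{s,u,H}\;\longrightarrow\;\mathbb{F}_{2^m}^{i}\times V_{s,u-i},
\]
which immediately yields the claimed cardinality.

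First I would fix a basis $h_1,\ldots,h_i$ of $H$, extend it to a basis $h_1,\ldots,h_u$ of $\mathbb{F}_{2^e}^{u}$, and let $A=(a_{kj})$ be the invertible $u\times u$ matrix over $\mathbb{F}_{2^e}$ whose $k$-th row is $h_k$. Then on $\mathbb{F}_{2^m}^{2u}$ I would introduce the new coordinates
\[
y_k=\sum_{j=1}^{u}a_{kj}\,x_{2j},\qquad u_k=\sum_{j=1}^{u}(A^{-1})_{jk}\,x_{2j-1}\qquad (k=1,\ldots,u),
\]
which gives an $\mathbb{F}_{2^m}$-linear bijection between $(x_1,x_2,\ldots,x_{2u})$ and $(u_1,y_1,\ldots,u_u,y_u)$, with inverse $x_{2j}=\sum_k(A^{-1})_{jk}y_k$ and $x_{2j-1}=\sum_k a_{kj}u_k$.

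The main step is to verify that this change of variables preserves the defining system of $V_{s,u}$. The key point is that because $e=(n,d)$ divides $d$, each Frobenius $\sigma_j : x\mapsto x^{2^{(n/e-j)d}}$ fixes $\mathbb{F}_{2^e}$ elementwise; pulling the $\sigma_j$-invariant coefficients through the Frobenius and collapsing $AA^{-1}=I$ rewrites the $j$-th equation as
\[
\sum_{k=1}^{u}\bigl(u_k y_k^{2^{(n/e-j)d}}+u_k^{2^{(n/e-j)d}}y_k\bigr)=0,\qquad j=0,1,\ldots,s,
\]
i.e.\ exactly the defining system of $V_{s,u}$ written in the new letters. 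This bookkeeping with indices is the only non-trivial part of the argument, and everything else is formal.

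Finally, since $h_1,\ldots,h_i$ is a basis of $H$, the condition $Z(\vec x)\supseteq H$ translates into $y_1=\cdots=y_i=0$. Substituting this into each transformed equation kills the first $i$ summands, leaving precisely the system that defines $V_{s,u-i}$ in the letters $(u_{i+1},y_{i+1},\ldots,u_u,y_u)$, while $u_1,\ldots,u_i$ are completely unconstrained in $\mathbb{F}_{2^m}$. This produces the bijection $\Phi$ and the identity $|W_{s,u,H}|=2^{mi}\,|V_{s,u-i}|$.
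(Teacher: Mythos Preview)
Your argument is correct and rests on the same idea as the paper's proof: because $e\mid d$, each Frobenius $x\mapsto x^{2^{(n/e-j)d}}$ fixes $\mathbb{F}_{2^e}$, so the alternating sum $\sum_l\bigl(x_{2l-1}\sigma_j(x_{2l})+\sigma_j(x_{2l-1})x_{2l}\bigr)$ is invariant under the contragredient $\mathbb{F}_{2^e}$-linear substitution $(x_{\mathrm{even}},x_{\mathrm{odd}})\mapsto(A\,x_{\mathrm{even}},(A^{-1})^{T}x_{\mathrm{odd}})$. The only difference is packaging: the paper peels off one basis vector of $H$ at a time and appeals to induction on $i$, whereas you carry out the full change of coordinates in a single step; your direct version is a bit cleaner and avoids the bookkeeping with the block form of the generating matrix.
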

\begin{proof}
Suppose that $H$ is generated by the row vectors of a matrix $A$ over $\mathbb{F}_{2^e}$. We may assume that $A\neq0$. Changing the order of the variables if necessary, we may further the last column of $A$ is $(1,0,\cdots, 0)^{\rm T}$, where ${\rm T}$ denotes the transposition. That is , $A$ is of the form
\[\left( \begin{array}{cc}
\alpha & 1\\
 B&0\\
\end{array} \right),
\]
where $\alpha=(\alpha_{1},\alpha_{2}, \cdots, \alpha_{u-1}) \in \mathbb{F}_{2^{e}}^{u-1}$. Then $W_{s,u,H}$ is the set of solutions $(x_{1}, \cdots, x_{2u})$ in $\mathbb{F}_{2^{m}}^{2u}$ of the system
\[
\begin{cases}
\sum_{i=1}^{u}(x_{2i-1}x_{2i}^{2^{(\frac{n}{e}-j)d}}+x_{2i-1}^{2^{(\frac{n}{e}-j)d}}x_{2i})=0,~j=1, 2, \cdots,s,\\
B(x_{2}, x_{4}, \cdots, x_{2u-2})^{\rm T}=0\\
x_{2u}=\alpha_1x_{2}+\alpha_2x_{4}+\cdots+\alpha_{u-1}x_{2u-2}.
\end{cases}\]
Replacing $x_{2i-1}+\alpha_{i}x_{2u-1}$ with $x_{2i-1}$, $i=1,2,\cdots,u-1$, we conclude that the above system is equivalent to the system
\[
\begin{cases}
\sum_{i=1}^{u-1}(x_{2i-1}x_{2i}^{2^{(\frac{n}{e}-j)d}}+x_{2i-1}^{2^{(\frac{n}{e}-j)d}}x_{2i})=0,~j=0,1, 2, \cdots,s,\\
B(x_{2}, x_{4}, \cdots, x_{2u-2})^{\rm T}=0,\\
x_{2u}=\alpha_1x_{2}+\alpha_2x_{4}+\cdots+\alpha_{u-1}x_{2u-2},\\
x_{2u-1}~{\rm free}.
\end{cases}
\]
The lemma now follows by induction.
\end{proof}
We can also prove the following.
\begin{lemma}\label{independent}For $s\geq u\geq2$, we have
\[|V_{s,u,\{0\}}|=2^{eu(u+1)/2}\prod_{i=0}^{u-1}(2^m-2^{ei}).\]
\end{lemma}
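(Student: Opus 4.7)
The plan is to count $V_{s,u,\{0\}}$ by fixing the even-indexed coordinates first and then counting the odd-indexed completions. First I would pick an ordered $\mathbb{F}_{2^e}$-linearly-independent $u$-tuple $(x_2,x_4,\ldots,x_{2u})$ in $\mathbb{F}_{2^m}$; there are $\prod_{i=0}^{u-1}(2^m-2^{ei})$ such tuples, and the task reduces to counting the valid $(x_1,x_3,\ldots,x_{2u-1})$ for each fixed such tuple.

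For the parametrization of the odd-indexed coordinates, I would apply Theorem \ref{lineardependence2} not only to the tuple in its given ordering but to every cyclic reordering of the pairs—the defining system (\ref{bilinearminimalnumber3}) is invariant under permutations of the pairs and under swapping within a pair, since each $B_l(y,z):=yz^{2^{(n/e-l)d}}+y^{2^{(n/e-l)d}}z$ is symmetric. Hence every $x_{2j-1}$ must lie in $\mathrm{span}_{\mathbb{F}_{2^e}}(x_2,\ldots,x_{2u})$, so I may write $x_{2j-1}=\sum_{i=1}^u \beta_{j,i}\,x_{2i}$ with unique $\beta_{j,i}\in\mathbb{F}_{2^e}$. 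Since $e\mid d$ implies $\mathbb{F}_{2^e}$ is fixed by the Frobenius $x\mapsto x^{2^{(n/e-l)d}}$, and $B_l$ is $\mathbb{F}_{2^e}$-bilinear and symmetric, substitution collapses the $l$-th equation to
\[
\sum_{1\leq i<j\leq u}\gamma_{i,j}\,B_l(x_{2i},x_{2j})=0,\qquad \gamma_{i,j}:=\beta_{i,j}+\beta_{j,i}.
\]

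The heart of the proof will be to show that $\gamma_{i,j}=0$ for all $i<j$. I would extend $\gamma$ to a $u\times u$ alternating matrix over $\mathbb{F}_{2^e}$ (symmetric with zero diagonal, in characteristic 2) of some even rank $2r$. By the classification of alternating forms, there exists $P\in\mathrm{GL}_u(\mathbb{F}_{2^e})$ with $P^{T}\gamma P$ equal to the block-diagonal matrix formed by $r$ copies of $\bigl(\begin{smallmatrix}0&1\\1&0\end{smallmatrix}\bigr)$ followed by a zero block. Setting $y_a:=\sum_i(P^{-1})_{a,i}\,x_{2i}$ gives an $\mathbb{F}_{2^e}$-basis of $\mathrm{span}(x_{2i})$; because $P$ has entries in $\mathbb{F}_{2^e}$, it commutes with each Frobenius twist $z\mapsto z^{2^{(n/e-l)d}}$, and the transformed equations read $\sum_{k=1}^{r}B_l(y_{2k-1},y_{2k})=0$ for $l=0,1,\ldots,s$. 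This says $(y_1,\ldots,y_{2r})\in V_{s,r}$. If $r\geq 1$, then $s\geq u\geq r$ allows me to invoke Theorem \ref{lineardependence2} once more, which forces $y_1,y_2,y_4,\ldots,y_{2r}$ to be $\mathbb{F}_{2^e}$-linearly dependent; but these $r+1$ vectors are a subset of the basis $(y_1,\ldots,y_u)$, a contradiction. Hence $r=0$ and $\gamma=0$.

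With $\gamma=0$, the relation $\beta_{j,i}=\beta_{i,j}$ means $(\beta_{j,i})$ is an arbitrary symmetric $u\times u$ matrix over $\mathbb{F}_{2^e}$, yielding exactly $2^{eu(u+1)/2}$ valid tuples $(x_1,x_3,\ldots,x_{2u-1})$ per choice of $(x_2,\ldots,x_{2u})$. Multiplying by $\prod_{i=0}^{u-1}(2^m-2^{ei})$ gives the desired formula. The main obstacle is the congruence reduction in the third paragraph: one must recognize that change of basis by $P\in\mathrm{GL}_u(\mathbb{F}_{2^e})$ preserves the Frobenius structure of the equations (because $P$ has entries in $\mathbb{F}_{2^e}$) and that the new basis vectors automatically inherit the linear independence required to re-apply Theorem \ref{lineardependence2} at the smaller size $r$, so the argument closes without any separate computation of $|V_{s,u}|$.
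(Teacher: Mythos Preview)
Your argument is correct. You fix the even-indexed coordinates as an ordered $\mathbb{F}_{2^e}$-independent tuple, use Theorem~\ref{lineardependence2} (applied after permuting each pair to the front) to force every $x_{2j-1}$ into $\mathrm{span}_{\mathbb{F}_{2^e}}(x_2,\ldots,x_{2u})$, and then parametrize the odd coordinates by a matrix $\beta\in M_u(\mathbb{F}_{2^e})$. The substitution reduces each equation to $\sum_{i<j}\gamma_{i,j}B_l(x_{2i},x_{2j})=0$ with $\gamma=\beta+\beta^{\mathrm T}$; bringing $\gamma$ to symplectic normal form over $\mathbb{F}_{2^e}$ and changing basis in $\mathrm{span}(x_{2i})$ accordingly turns the system into $\sum_{k=1}^r B_l(y_{2k-1},y_{2k})=0$, which places $(y_1,\ldots,y_{2r})$ in $V_{s,r}$. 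Since $2r\le u\le s$, a second appeal to Theorem~\ref{lineardependence2} contradicts the independence of the $y_a$ unless $r=0$. Hence the admissible $\beta$ are exactly the symmetric matrices, giving the factor $2^{eu(u+1)/2}$ directly.

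This is a genuinely different route from the paper's proof. The paper proceeds by induction on $u$: it invokes Theorem~\ref{lineardependence2} once to write $x_1=\sum_i\alpha_i x_{2i}$, then performs the substitution $x_{2i-1}\mapsto x_{2i-1}+\alpha_i x_2$ (for $i\ge 2$) to decouple the pair $(x_1,x_2)$ from the system and obtain the recursion $|V_{s,u,\{0\}}|=2^{eu}(2^m-2^{e(u-1)})\,|V_{s,u-1,\{0\}}|$. The paper's argument is shorter and entirely elementary, needing no structure theory of alternating forms. Your argument, by contrast, is non-inductive and more conceptual: it explains the exponent $eu(u+1)/2$ as the dimension of the space of symmetric matrices, and it makes transparent that the only obstruction to an arbitrary choice of $\beta$ is the alternating part $\gamma$. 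The one point that deserves an explicit sentence in your write-up is the change-of-basis identity $\sum_{i<j}\gamma_{i,j}B_l(x_{2i},x_{2j})=\sum_{a<b}(P^{\mathrm T}\gamma P)_{a,b}B_l(y_a,y_b)$, which in characteristic~2 is not the naive ``divide the double sum by two'' computation but follows cleanly from viewing $B_l$ as a linear map on $\wedge^2$.
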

\begin{proof} It suffices to show that
\[|V_{s,u,\{0\}}|=2^{eu}(2^m-2^{e(u-1)})|V_{s,u-1,\{0\}}|.\]
By Theorem \ref{lineardependence2}, it suffices to show that, for each $(\alpha_1,\alpha_2,\cdots,\alpha_u)\in{\mathbb F}_{2^e}^u$, the number of solutions $(x_1,x_{2},\cdots,x_{2u})$ in ${\mathbb F}_{2^m}$ of
the system
\[\begin{cases}
\sum_{i=1}^{u}(x_{2i-1}x_{2i}^{2^{(\frac{n}{e}-j)d}}+x_{2i-1}^{2^{(\frac{n}{e}-j)d}}x_{2i})=0,~j=0,1,\cdots,s,\\
x_{1}=\alpha_1x_{2}+\alpha_2x_{4}+\cdots+\alpha_u x_{2u},\\
x_{2},x_4,\cdots,x_{2u}\text{ are linearly independent over }{\mathbb F}_{2^e}\end{cases}\]
is equal to $(2^m-2^{e(u-1)})V_{s,u-1,\{0\}}$.  Replacing $x_{2i-1}$ with $x_{2i-1}+\alpha_ix_{2}$ for each $i\geq2$, we see that the above system is equivalent to the system
\[\begin{cases}
\sum_{i=2}^{u}(x_{2i-1}x_{2i}^{2^{(\frac{n}{e}-j)d}}+x_{2i-1}^{2^{(\frac{n}{e}-j)d}}x_{2i})=0,~j=0,1,\cdots,s,\\
x_{1}=\alpha_1x_{2}+\alpha_2x_{4}+\cdots+\alpha_u x_{2u},\\
x_{2},x_4,\cdots,x_{2u}\text{ are linearly independent over }{\mathbb F}_{2^e},\end{cases}\]
whose number of solutions is precisely $(2^m-2^{e(u-1)})V_{s,u-1,\{0\}}$.
The lemma now follows.\end{proof}
Applying the $q$-binomial M\"{o}bius inversion formula, we arrive at the following.
\begin{corollary}\label{recursiverelation} If $s\geq u\geq1$, then
\[\sum_{i=0}^{u}(-1)^{u-i}2^{e\binom{u-i}{2}}\binom{u}{i}_{2^e}2^{-mi}|V_{s,i}|
=2^{eu(u+1)/2}\prod_{i=0}^{u-1}(1-2^{ei-m}),
\]where $|V_{s,0}|=1$.
\end{corollary}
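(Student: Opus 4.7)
The plan is to treat the preceding two lemmas as the two halves of a standard Möbius setup on the subspace lattice of $\mathbb{F}_{2^e}^u$ and then recognize the $q$-binomial inversion already stated in the paper. First I would group the pieces $V_{s,u,L}$ by the dimension of $L$: set
\[
b_\ell \;=\; \sum_{\dim L = \ell}|V_{s,u,L}|,
\]
so that the identity $W_{s,u,H}=\bigcup_{L\supseteq H} V_{s,u,L}$ together with the first lemma $|W_{s,u,H}|=2^{m\dim H}|V_{s,u-\dim H}|$ gives, upon summing over all $H$ of a fixed dimension $j$,
\[
\binom{u}{j}_{2^e}\,2^{mj}\,|V_{s,u-j}|
\;=\;\sum_{H:\dim H=j}\sum_{L\supseteq H}|V_{s,u,L}|
\;=\;\sum_{\ell=j}^{u}\binom{\ell}{j}_{2^e}\,b_\ell,
\]
since an $\ell$-dimensional $L$ contains exactly $\binom{\ell}{j}_{2^e}$ subspaces of dimension $j$.

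Next I would apply the second orthogonality relation from the $q$-binomial Möbius inversion formula (with $q=2^e$) to this triangular system. Using
\[
\sum_{v=v_0}^{i}\binom{i}{v}_{q}(-1)^{v-v_0}q^{\binom{v-v_0}{2}}\binom{v}{v_0}_{q}=\delta_{i v_0},
\]
a direct manipulation yields $b_{v_0}=\sum_{v\geq v_0}(-1)^{v-v_0}q^{\binom{v-v_0}{2}}\binom{v}{v_0}_{q}\,\binom{u}{v}_{2^e}\,2^{mv}|V_{s,u-v}|$. Specializing to $v_0=0$ isolates $b_0=|V_{s,u,\{0\}}|$ on the left. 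Re-indexing the sum on the right via $j=u-v$ and using $\binom{u}{u-j}_{2^e}=\binom{u}{j}_{2^e}$ converts the expression into
\[
|V_{s,u,\{0\}}|
\;=\;\sum_{j=0}^{u}(-1)^{u-j}\,2^{e\binom{u-j}{2}}\binom{u}{j}_{2^e}\,2^{m(u-j)}|V_{s,j}|.
\]

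Finally, I would divide both sides by $2^{mu}$ to produce the sum on the left-hand side of the corollary exactly as stated, and then substitute Lemma~\ref{independent} to rewrite $2^{-mu}|V_{s,u,\{0\}}|=2^{eu(u+1)/2}\prod_{i=0}^{u-1}(1-2^{ei-m})$, which is precisely the right-hand side. The hypothesis $s\geq u$ is needed only so that Lemma~\ref{independent} applies to each $V_{s,u,\{0\}}$ arising in the induction implicit in that lemma; no other constraint enters.

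The main obstacle, if any, is the bookkeeping in step two: one must be careful that the inner sum arising after multiplying the triangular system by $(-1)^{v-v_0}q^{\binom{v-v_0}{2}}\binom{v}{v_0}_{q}$ collapses to $\delta_{iv_0}$ via the correct one of the two orthogonality relations (the second, not the first, since the variable being summed is the \emph{upper} index of $\binom{\ell}{j}_q$). Once the right orthogonality is invoked, the remainder of the argument is a straightforward re-indexing and a single invocation of Lemma~\ref{independent}.
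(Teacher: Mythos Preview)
Your argument is correct and is precisely the expansion of what the paper means by ``applying the $q$-binomial M\"obius inversion formula'': you invert the relation $|W_{s,u,H}|=\sum_{L\supseteq H}|V_{s,u,L}|$ on the subspace lattice of $\mathbb{F}_{2^e}^{u}$, specialize to $H=\{0\}$, and plug in the two preceding lemmas. The only minor inaccuracies are that Lemma~\ref{independent} is invoked once for the given $u$ (there is no further induction in this corollary), and Lemma~\ref{independent} is stated for $u\geq 2$, so the case $u=1$ technically requires a one-line direct check of $|V_{s,1,\{0\}}|=2^{e}(2^{m}-1)$.
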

Applying the $q$-binomial M\"{o}bius inversion formula once more, we arrive at the following.
\begin{theorem}\label{numberofsolutions}If $s\geq i\geq1$, then
\[|V_{s,i}|
=2^{mi}\sum_{u=0}^{i}\binom{i}{u}_{2^e}2^{eu(u+1)/2}\prod_{j=0}^{u-1}(1-2^{ej-m}).
\]\end{theorem}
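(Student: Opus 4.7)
The plan is to obtain Theorem \ref{numberofsolutions} as a direct Möbius inversion of Corollary \ref{recursiverelation}. Write $q=2^e$ and introduce the normalized unknowns $x_i=2^{-mi}|V_{s,i}|$, with the convention $x_0=1$. With the right-hand side abbreviated as
\[
y_u=2^{eu(u+1)/2}\prod_{j=0}^{u-1}(1-2^{ej-m}),
\]
Corollary \ref{recursiverelation} asserts exactly
\[
y_u=\sum_{i=0}^{u}(-1)^{u-i}q^{\binom{u-i}{2}}\binom{u}{i}_{q}\,x_i\qquad(s\ge u\ge 0),
\]
where the $u=0$ case reduces to $x_0=y_0=1$.

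The key step is then to apply the $q$-binomial Möbius inversion formula stated at the beginning of Section~3, taken in the form ``$y_u=\sum_{i=0}^u(-1)^{u-i}q^{\binom{u-i}{2}}\binom{u}{i}_q x_i$ if and only if $x_u=\sum_{i=0}^u\binom{u}{i}_q y_i$.'' This equivalence follows from the orthogonality of the two triangular matrices $(\binom{u}{i}_q)$ and $((-1)^{u-i}q^{\binom{u-i}{2}}\binom{u}{i}_q)$ given in that theorem (and I would quote the second orthogonality relation, with $v=0$, for the inversion in this direction). Inverting the displayed relation therefore yields
\[
x_i=\sum_{u=0}^{i}\binom{i}{u}_{q}\,y_u\qquad(s\ge i\ge 0).
\]

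Substituting back $x_i=2^{-mi}|V_{s,i}|$ and the explicit formula for $y_u$ gives
\[
|V_{s,i}|=2^{mi}\sum_{u=0}^{i}\binom{i}{u}_{2^e}2^{eu(u+1)/2}\prod_{j=0}^{u-1}(1-2^{ej-m}),
\]
which is the claimed identity. There is essentially no obstacle once the problem has been set up this way: all the combinatorial and counting work (in particular the linear-independence reduction of Theorem \ref{lineardependence2} and the dimension-by-dimension stratification used for Lemma \ref{independent}) has already been absorbed into Corollary \ref{recursiverelation}, and what remains is a purely formal application of $q$-Möbius inversion. The only point worth checking carefully is the indexing: the hypothesis $s\ge u$ in Corollary \ref{recursiverelation} must hold for every $u\le i$ appearing on the right, which is automatic from $s\ge i$.
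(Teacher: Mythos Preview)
Your proposal is correct and follows exactly the route the paper takes: the paper derives Theorem \ref{numberofsolutions} from Corollary \ref{recursiverelation} by a single application of the $q$-binomial M\"{o}bius inversion formula, and you have simply written out that inversion in detail. The only quibble is that the orthogonality relation you need is the second one in Section~3 with general $v$ (not just $v=0$), but this does not affect the argument.
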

\section{\small{A RECURSIVE RELATION}}
In this section we shall prove an useful recursive relation for $|V_{s,i}|$.
We begin with the following.
\begin{lemma}\label{numberofsolutions2}If $s\geq i\geq1$, then
\[2^{-mi}|V_{s,i}|
=\sum_{j=0}^{i}(-1)^{j}4^{e\binom{j}{2}}2^{-mj}\binom{i}{j}_{4^e}\sum_{u=0}^{i-j}2^{eu}\binom{i-j}{u}_{4^e}.
\]\end{lemma}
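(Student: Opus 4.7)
The plan is to derive the stated identity from Theorem \ref{numberofsolutions} by expanding and re-summing, converting between $q$-binomial coefficients with $q=2^e$ and $q=4^e$ along the way. Theorem \ref{numberofsolutions} provides
\[2^{-mi}|V_{s,i}|=\sum_{u=0}^{i}\binom{i}{u}_{2^e}2^{eu(u+1)/2}\prod_{j=0}^{u-1}(1-2^{ej-m}),\]
and the target formula rewrites this in terms of $\binom{\cdot}{\cdot}_{4^e}$-coefficients with the dependence on $2^{-m}$ made explicit. The passage from the former to the latter should be purely algebraic, using the $q$-binomial theorem together with the two product formulas of Section 4.

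First, I would expand the product $\prod_{j=0}^{u-1}(1-2^{ej-m})$ via the standard $q$-binomial theorem $\prod_{j=0}^{u-1}(1-xq^j)=\sum_{k=0}^{u}(-1)^k q^{\binom{k}{2}}\binom{u}{k}_q x^k$ with $q=2^e$ and $x=2^{-m}$. Substituting this back, interchanging the order of summation, and using the elementary identity $\binom{i}{u}_q\binom{u}{k}_q=\binom{i}{k}_q\binom{i-k}{u-k}_q$ to peel off the outer index, I would then reindex $v=u-k$. The exponent of $2^e$ in the inner summand combines to give $e(v(v+1)/2+vk)$ after absorbing $\binom{k}{2}+k(k+1)/2=k^2$ into the outer coefficient, so the inner sum takes the shape $\sum_{v=0}^{i-k}\binom{i-k}{v}_{2^e}2^{e(v(v+1)/2+vk)}$. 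A second application of the $q$-binomial theorem (now with the substitution $x=q^{k+1}$) collapses this inner sum to the product $\prod_{\ell=k+1}^{i}(1+2^{e\ell})$.

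At that point the expression is a single sum, each of whose terms contains the factor $\binom{i}{k}_{2^e}\prod_{\ell=k+1}^{i}(1+2^{e\ell})$ multiplied by a pure power of $2$. The final step is to convert this back into a sum of $\binom{\cdot}{\cdot}_{4^e}$-coefficients: this is exactly the content of Theorem \ref{twovsone} applied with the roles $u\mapsto i$, $i\mapsto i-k$ (after using the symmetry $\binom{i}{k}_q=\binom{i}{i-k}_q$ of the $q$-binomial coefficient), together with Theorem \ref{prodform} to handle the boundary case $k=i$. Renaming the outer index $k$ as $j$ and collecting the accumulated powers of $2^e$ and $2^{-m}$ then yields the claimed identity.

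I expect the main obstacle to be the careful bookkeeping of the accumulating powers of $2^e$ and $2^{-m}$ across the two applications of the $q$-binomial theorem and the subsequent invocation of Theorem \ref{twovsone}: matching the exponent $e k^2$ produced by my rearrangement against the exponent $2e\binom{j}{2}$ appearing in the statement, and verifying that the $\binom{\cdot}{\cdot}_{2^e}$-weighted product of $(1+2^{e\ell})$'s really does convert into the $\binom{\cdot}{\cdot}_{4^e}$-weighted sum asserted, both require careful index tracking. Beyond this purely computational obstacle there is no genuine conceptual difficulty: the argument is a direct chain of identities already established in Sections 3 and 4.
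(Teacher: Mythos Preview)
Your approach is essentially identical to the paper's: expand the product in Theorem~\ref{numberofsolutions} via the $q$-binomial theorem, swap the sums, apply $\binom{i}{u}_q\binom{u}{j}_q=\binom{i}{j}_q\binom{i-j}{u-j}_q$, collapse the inner sum back to a product by a second $q$-binomial expansion, and finish with Theorem~\ref{twovsone}. Your concern about matching the exponent $ek^2$ against $2e\binom{j}{2}$ is well-founded but not a flaw in your argument: the paper's own proof also terminates with $2^{ej^2}$ rather than $4^{e\binom{j}{2}}$ (and it is the $2^{ej^2}$ form that is actually used in the proof of Theorem~\ref{numberofsolutions3}), so the discrepancy is a typo in the lemma statement, not a gap in your reasoning.
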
\begin{proof}
By Theorem \ref{numberofsolutions}, and by the $q$-binomial formula, we have

\[2^{-mi}|V_{s,i}|
=\sum_{u=0}^{i}\binom{i}{u}_{2^e}2^{eu(u+1)/2}\sum_{j=0}^{u}(-1)^j2^{e\binom{j}{2}}\binom{u}{j}_{2^e}2^{-mj}.
\]
Changing the order of summation, we arrive at
\[2^{-mi}|V_{s,i}|
=\sum_{j=0}^{i}(-1)^j2^{e\binom{j}{2}}2^{-mj}\sum_{u=j}^{i}2^{eu(u+1)/2}\binom{i}{u}_{2^e}\binom{u}{j}_{2^e}.
\]
Applying the identity
\[\binom{i}{u}_{2^e}\binom{u}{j}_{2^e}=\binom{i}{j}_{2^e}\binom{i-j}{u-j}_{2^e},\]
we arrive at
\[2^{-mi}|V_{s,i}|
=\sum_{j=0}^{i}(-1)^j2^{2e\binom{j}{2}}2^{-mj}\binom{i}{j}_{2^e}
\sum_{u=0}^{i-j}2^{e\binom{u}{2}}\binom{i-j}{u}_{2^e}2^{euj}.
\]
Applying the $q$-binomial formula, we arrive at
\[2^{-mi}|V_{s,i}|
=\sum_{j=0}^{i}(-1)^j2^{ej^2}2^{-mj}\binom{i}{j}_{2^e}
\prod_{u=0}^{i-j-1}(1+2^{e(i-u)}).
\]
Applying Theorem \ref{twovsone}, we arrive at
\[2^{-mi}|V_{s,i}|
=\sum_{j=0}^{i}(-1)^{j}2^{ej^2}2^{-mj}\binom{i}{j}_{4^e}\sum_{u=0}^{i-j}2^{eu}\binom{i-j}{u}_{4^e}.
\]
The lemma is proved.
\end{proof}
We now prove the following.
\begin{theorem}\label{numberofsolutions3}If $s\geq i\geq1$, then
\[\sum_{i=0}^v(-1)^{v-i}4^{e\binom{v-i}{2}}\binom{v}{i}_{4^e}
2^{-mi}|V_{s,i}|=2^{(e-m)v}\prod_{j=0}^{v-1}(2^{m}-4^{ej}).
\]
\end{theorem}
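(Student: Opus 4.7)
The plan is to start from Lemma \ref{numberofsolutions2}, interchange the order of summation in the left-hand side, recognize the inner sum as an instance of the $q$-binomial M\"obius inversion formula stated at the beginning of Section 3, and then apply the $q$-binomial theorem.

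First, I would substitute the expression for $2^{-mi}|V_{s,i}|$ supplied by Lemma \ref{numberofsolutions2} into the left-hand side, writing the inner sum there as $h(i-j)$ with
\[
h(\ell) := \sum_{u=0}^{\ell} 2^{eu}\binom{\ell}{u}_{4^e}.
\]
After interchanging summation so that $j$ is outermost and $i$ runs from $j$ to $v$, and after applying the standard identity $\binom{v}{i}_{4^e}\binom{i}{j}_{4^e} = \binom{v}{j}_{4^e}\binom{v-j}{i-j}_{4^e}$ to pull $\binom{v}{j}_{4^e}$ out, the $i$-sum reduces (upon substituting $\ell = i-j$ and setting $N = v-j$) to
\[
\sum_{\ell=0}^{N} (-1)^{N-\ell} 4^{e\binom{N-\ell}{2}} \binom{N}{\ell}_{4^e} h(\ell).
\]

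Second, since $h(\ell)$ is precisely the forward $q$-binomial transform (with $q = 4^e$) of the sequence $u \mapsto 2^{eu}$, the $q$-binomial M\"obius inversion formula immediately identifies the inner display as $2^{eN} = 2^{e(v-j)}$. This collapses the double sum to a single sum in $j$ of the shape
\[
2^{ev}\sum_{j=0}^{v} (-1)^j 4^{e\binom{j}{2}}\binom{v}{j}_{4^e} z^j
\]
for a suitable negative power $z$ of $2$ determined by the prefactors carried along from Lemma \ref{numberofsolutions2}.

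Third, I would invoke the $q$-binomial theorem $\sum_{j=0}^{v}(-1)^{j} q^{\binom{j}{2}}\binom{v}{j}_{q} z^{j} = \prod_{i=0}^{v-1}(1-zq^{i})$ with $q = 4^e$. Pulling out $2^m$ from each of the $v$ factors of the resulting product converts $\prod_{i=0}^{v-1}(1 - 2^{2ei}\cdot \text{const})$ into $2^{-mv}\prod_{j=0}^{v-1}(2^m - 4^{ej})$, and combining with the prefactor $2^{ev}$ delivers $2^{(e-m)v}\prod_{j=0}^{v-1}(2^m - 4^{ej})$, as required.

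The main obstacle I anticipate is the bookkeeping of powers of $2$ and of the translation between $2^e$- and $4^e$-binomial coefficients inherited from Lemma \ref{numberofsolutions2} (which itself rests on the bridging formula Theorem \ref{twovsone}); the constant $z$ in the penultimate step must be tracked precisely so that the product rearranges to $\prod_{j=0}^{v-1}(2^m - 4^{ej})$ and not to a shifted version of it. Once that bookkeeping is under control, the conceptual heart of the proof is the single clean application of $q$-binomial M\"obius inversion to the inner sum, after which everything is standard.
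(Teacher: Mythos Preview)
Your proposal is correct and follows essentially the same route as the paper: substitute Lemma~\ref{numberofsolutions2}, interchange summations, apply $q$-binomial M\"obius inversion (with $q=4^e$) to collapse the inner sum, and finish with the $q$-binomial formula to obtain the product $\prod_{j=0}^{v-1}(2^m-4^{ej})$. The only cosmetic difference is that you package the inner $u$-sum as $h(\ell)$ and invert it directly, whereas the paper keeps the $u$-sum explicit, uses the three-term identity $\binom{v}{i}_{4^e}\binom{i}{j}_{4^e}\binom{i-j}{u}_{4^e}=\binom{v}{u}_{4^e}\binom{v-u}{i-u}_{4^e}\binom{i-u}{j}_{4^e}$, and then applies orthogonality to the resulting $i$-sum; the two computations are equivalent, and your version is arguably a shade cleaner.
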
\begin{proof}
By Lemma \ref{numberofsolutions2}, we have
\[\begin{split}&\sum_{i=0}^v(-1)^{v-i}4^{e\binom{v-i}{2}}\binom{v}{i}_{4^e}
2^{-mi}|V_{s,i}|\\
=&\sum_{i=0}^v(-1)^{v-i}4^{e\binom{v-i}{2}}\binom{v}{i}_{4^e}
\sum_{j=0}^{i}(-1)^{j}2^{ej^2}2^{-mj}\binom{i}{j}_{4^e}\sum_{u=0}^{i-j}2^{eu}\binom{i-j}{u}_{4^e}
\\
=&\sum_{j=0}^{v}(-1)^{j}2^{ej^2}2^{-mj}\sum_{i=j}^v(-1)^{v-i}4^{e\binom{v-i}{2}}\binom{v}{i}_{4^e}
\binom{i}{j}_{4^e}\sum_{u=0}^{i-j}2^{eu}\binom{i-j}{u}_{4^e}\\
=&\sum_{j=0}^{v}(-1)^{j}2^{ej^2}2^{-mj}\sum_{u=0}^{v-j}2^{eu}\sum_{i=u+j}^v(-1)^{v-i}4^{e\binom{v-i}{2}}\binom{v}{i}_{4^e}
\binom{i}{j}_{4^e}\binom{i-j}{u}_{4^e}.
\end{split}
\]
Applying the identity
\[\binom{v}{u}_{4^e}\binom{v-u}{i-u}_{4^e}\binom{i-u}{j}_{4^e}
=\binom{v}{i}_{4^e}\binom{i}{j}_{4^e}\binom{i-j}{u}_{4^e},\]we arrive at

\[\begin{split}&\sum_{i=0}^v(-1)^{v-i}4^{e\binom{v-i}{2}}\binom{v}{i}_{4^e}
2^{-mi}|V_{s,i}|\\
=&\sum_{j=0}^{v}(-1)^{j}2^{ej^2}2^{-mj}\sum_{u=0}^{v-j}2^{eu}\binom{v}{u}_{4^e}
\sum_{i=j}^{v-u}(-1)^{v-i-u}4^{e\binom{v-i-u}{2}}\binom{v-u}{i}_{4^e}\binom{i}{j}_{4^e}.
\end{split}
\]
Applying the
$q$-binomial M\"{o}bius inversion formula, we arrive at
\[\sum_{i=0}^v(-1)^{v-i}4^{e\binom{v-i}{2}}\binom{v}{i}_{4^e}
2^{-mi}|V_{s,i}|=2^{ev}\sum_{j=0}^{v}(-1)^{j}4^{e\binom{j}{2}}\binom{v}{j}_{4^e}2^{-mj}.
\]
Applying the $q$-binomial formula, we arrive at
\[\sum_{i=0}^v(-1)^{v-i}4^{e\binom{v-i}{2}}\binom{v}{i}_{4^e}
2^{-mi}|V_{s,i}|=2^{(e-m)v}\prod_{j=0}^{v-1}(2^{m}-4^{ej}).
\]\end{proof}
\section{\small{ENTERING BILINEAR EQUATIONS}}
In this section we shall prove Theorem \ref{main}.
We begin with the following.
\begin{theorem}\label{signeffect}For each $r=0,2,\cdots,\frac{m}{e}$, \[\alpha_{r,\varepsilon} =\frac12(2^{er}+\varepsilon2^{\frac{er}{2}})\beta_r,\]
where
\begin{equation}\label{rkfrequencydef}\beta_{r}=2^{-m}\#\{\vec{a}\in{\mathbb F}_{2^n}\times{\mathbb F}_{2^m}^k\mid~{\rm rk}(B_{\vec a})=r,~(a_0,a_1,\cdots,a_{k-1})\neq0\}.\end{equation}
\end{theorem}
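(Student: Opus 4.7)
The plan is to determine the two unknowns $\alpha_{r,+1}$ and $\alpha_{r,-1}$ from the first and second moments of
\[
S(\vec a):=1+{\rm DC}(c_{\vec a})=\sum_{x\in\mathbb{F}_{2^m}}(-1)^{{\rm Tr}_{\mathbb{F}_{2^e}/\mathbb{F}_2}(Q_{\vec a}(x))},
\]
summed over those $\vec a\in\mathbb{F}_{2^n}\times\mathbb{F}_{2^m}^k$ with $(a_0,\ldots,a_{k-1})\neq 0$ and ${\rm rk}(B_{\vec a})=r$. Since $\vec a\mapsto c_{\vec a}$ is a bijection, and since (\ref{dcexpsumrelation})--(\ref{rkbilinear}) force $S(\vec a)\in\{0,\pm 2^{m-er/2}\}$ for every such $\vec a$ (with the nonzero values realized exactly when ${\rm rk}(Q_{\vec a})=r$), the two moments equal
\[
\sum_{\vec a}S(\vec a)=2^{m-er/2}(\alpha_{r,+1}-\alpha_{r,-1}),\qquad\sum_{\vec a}S(\vec a)^2=2^{2m-er}(\alpha_{r,+1}+\alpha_{r,-1}).
\]

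The key observation is that $B_{\vec a}$ does not depend on $a_k$ (the linear term of $Q_{\vec a}$ is killed by symmetrization), so the condition ${\rm rk}(B_{\vec a})=r$ constrains only $(a_0,\ldots,a_{k-1})$. For any fixed $(a_0,\ldots,a_{k-1})$, transitivity of trace rewrites the $a_k$-dependence of the integrand as the additive character $(-1)^{{\rm Tr}_{\mathbb{F}_{2^m}/\mathbb{F}_2}(a_k x)}$, and orthogonality of additive characters on $\mathbb{F}_{2^m}$ yields
\[
\sum_{a_k\in\mathbb{F}_{2^m}}S(\vec a)=2^m\quad\text{(only $x=0$ survives)},\qquad\sum_{a_k\in\mathbb{F}_{2^m}}S(\vec a)^2=2^{2m}
\]
(after expanding $S(\vec a)^2$ as a sum over pairs $(x,y)$ and keeping the diagonal $x=y$). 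Definition (\ref{rkfrequencydef}) shows $\beta_r$ equals the number of $(a_0,\ldots,a_{k-1})\neq 0$ with ${\rm rk}(B_{\vec a})=r$, so summing the two $a_k$-averages over these $\beta_r$ tuples produces
\[
\sum_{\vec a}S(\vec a)=2^m\beta_r,\qquad\sum_{\vec a}S(\vec a)^2=2^{2m}\beta_r.
\]

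Equating the two evaluations of each moment yields the linear system $\alpha_{r,+1}-\alpha_{r,-1}=2^{er/2}\beta_r$ and $\alpha_{r,+1}+\alpha_{r,-1}=2^{er}\beta_r$, which solves to the desired identity. I do not foresee a real obstacle; the argument is essentially a first and second moment computation. Two small points deserve attention. First, the boundary case $r=0$: by Theorem \ref{rankbound} we have ${\rm rk}(B_{\vec a})\geq \frac{m}{e}-2(k-1)\geq 2$ whenever $(a_0,\ldots,a_{k-1})\neq 0$, so $\beta_0=0$, and no nonzero codeword can satisfy $|{\rm DC}(c)+1|=2^m$, reducing the claim to $0=0$. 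Second, for those $\vec a$ with ${\rm rk}(B_{\vec a})=r$ but ${\rm rk}(Q_{\vec a})=r+1$, the value $S(\vec a)=0$ contributes trivially to both moments, so the mismatch allowed by (\ref{rkbilinear}) is absorbed automatically and requires no separate bookkeeping.
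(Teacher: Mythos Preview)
Your proposal is correct and follows essentially the same first- and second-moment argument as the paper: the paper inserts an auxiliary average $2^{-m}\sum_{c\in\mathbb{F}_{2^m}}$ over an added linear term ${\rm Tr}(cx)$ and then applies character orthogonality, which is exactly your direct sum over $a_k$ (since $a_k$ enters $Q_{\vec a}$ only through ${\rm Tr}_{\mathbb{F}_{2^m}/\mathbb{F}_{2^e}}(a_kx)$ and $B_{\vec a}$ is independent of $a_k$). Your treatment of the boundary case $r=0$ and of the vectors with ${\rm rk}(Q_{\vec a})=r+1$ is also consistent with the paper.
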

\begin{proof} By (\ref{dcfrequencydef}), (\ref{dcexpsumrelation}), (\ref{quadraticexpsum}), (\ref{rkbilinear}), and (\ref{rkfrequencydef}), \[\begin{split}&2^{m-\frac{er}{2}}(\alpha_{r,1} -\alpha_{r,-1})\\&=
\sum_{{\rm rk}(B_{\vec a} )=r}\sum_{x\in{\mathbb F}_{2^m}}(-1)^{{\rm Tr}_{{\mathbb F}_{2^e}/{\mathbb F}_2}(Q_{\vec a} (x))}\\&=2^{-m}\sum_{c\in{\mathbb F}_{2^m}}\sum_{{\rm rk}(B_{\vec a} )=r}\sum_{x\in{\mathbb F}_{2^m}}(-1)^{{\rm Tr}_{{\mathbb F}_{2^e}/{\mathbb F}_2}({\rm Tr}_{{\mathbb F}_{2^m}/{\mathbb F}_{2^e}}(cx)+Q_{\vec a} (x))}\\&=2^{-m}\sum_{{\rm rk}(B_{\vec a} )=r}\sum_{x\in{\mathbb F}_{2^m}}(-1)^{{\rm Tr}_{{\mathbb F}_{2^e}/{\mathbb F}_2}(Q_{\vec a} (x))}\sum_{c\in{\mathbb F}_{2^m}}(-1)^{{\rm Tr}_{{\mathbb F}_{2^m}/{\mathbb F}_{2}}(cx)}\\
&=2^{m}
\beta_r.\end{split}
\]
Similarly,
 \[\begin{split}&2^{2m-er}(\alpha_{r,1} +\alpha_{r,-1})\\&=
\sum_{{\rm rk}(B_{\vec a} )=r}(\sum_{x\in{\mathbb F}_{2^m}}(-1)^{{\rm Tr}_{{\mathbb F}_{2^e}/{\mathbb F}_2}(Q_{\vec a} (x))})^2\\&=2^{-m}\sum_{c\in{\mathbb F}_{2^m}}\sum_{{\rm rk}(B_{\vec a} )=r}(\sum_{x\in{\mathbb F}_{2^m}}(-1)^{{\rm Tr}_{{\mathbb F}_{2^e}/{\mathbb F}_2}({\rm Tr}_{{\mathbb F}_{2^m}/{\mathbb F}_{2^e}}(cx)+Q_{\vec a} (x))})^2\\&=2^{-m}\sum_{{\rm rk}(B_{\vec a} )=r}\sum_{x,y\in{\mathbb F}_{2^m}}(-1)^{{\rm Tr}_{{\mathbb F}_{2^e}/{\mathbb F}_2}(Q_{\vec a} (x)+Q_{\vec a} (y))}\sum_{c\in{\mathbb F}_{2^m}}(-1)^{{\rm Tr}_{{\mathbb F}_{2^m}/{\mathbb F}_{2}}(c(x+y))}\\
&=2^{2m}
\beta_r.\end{split}
\]
The theorem is proved.\end{proof}
By Theorem \ref{signeffect}, Theorem \ref{main} follows from the following.
\begin{theorem}
For each $j=0,1,\cdots,k-1$, we have
\[\beta_{\frac{m}{e}-2j,\varepsilon}=
\sum_{v=j}^{k-1}(-1)^{v-j}4^{e\binom{v-j}{2}}\binom{v}{j}_{4^e}
\binom{\frac{m}{2e}}{v}_{4^e}(2^{n(2k-1-2v)+ev}-1).\]\end{theorem}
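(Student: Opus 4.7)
The plan is to relate $\beta_{m/e-2j}$ to the counts $|V_{k-1,i}|$ evaluated in Section~6 via orthogonality of additive characters, and then to invert using the machinery of Section~3. Setting $q=4^e$, observe that by Theorem~\ref{rankbound}, for $(a_0,\ldots,a_{k-1})\ne 0$ the radical ${\rm Rad}(B_{\vec a})$ has ${\mathbb F}_{2^e}$-dimension in $\{0,2,\ldots,2(k-1)\}$; consequently, for $0\le i\le k-1$,
\[y_i:=\sum_{(a_0,\ldots,a_{k-1})\ne 0}|{\rm Rad}(B_{\vec a})|^{i}=\sum_{j=0}^{k-1}q^{ij}\,\beta_{m/e-2j}.\]
The coefficient matrix $(q^{ij})_{0\le i,j\le k-1}$ is precisely the symmetric van der Monte matrix inverted in Section~3, so it suffices to evaluate $y_i$ explicitly and then apply that inversion.

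To evaluate $y_i$, I would use Fourier inversion on ${\mathbb F}_{2^m}$, expressing
\[\mathbf{1}\{x\in{\rm Rad}(B_{\vec a})\}=2^{-m}\sum_{z\in{\mathbb F}_{2^m}}(-1)^{{\rm Tr}_{{\mathbb F}_{2^e}/{\mathbb F}_2}(B_{\vec a}(x,z))},\]
then raising to the $i$-th power and swapping the $\vec a$-sum to the inside. The resulting inner character sum equals $2^{n(2k-1)}-1$ when the ${\mathbb F}_{2^e}$-linear functional $\vec a\mapsto\sum_{\ell=1}^i B_{\vec a}(x_\ell,z_\ell)$ vanishes identically, and equals $-1$ otherwise. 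By transitivity of trace the coefficient of $a_j$ in this functional is exactly $\sum_\ell(x_\ell z_\ell^{2^{(n/e-j)d}}+x_\ell^{2^{(n/e-j)d}}z_\ell)$, the expression set to zero by the $j$-th defining equation of $V_{k-1,i}$; hence the vanishing locus coincides with $V_{k-1,i}$, and
\[y_i=2^{n(2k-1)-mi}\,|V_{k-1,i}|-2^{mi}.\]

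I would then form $z_v:=\sum_{i=0}^v(-1)^{v-i}q^{\binom{v-i}{2}}\binom{v}{i}_q y_i$. The $|V_{k-1,i}|$-contribution collapses by Theorem~\ref{numberofsolutions3}, and the $2^{mi}$-contribution collapses by the $q$-binomial identity $\sum_i(-1)^iq^{\binom{i}{2}}\binom{v}{i}_qt^i=\prod_{j=0}^{v-1}(1-q^jt)$ at $t=2^{-m}$; both produce the common factor $\prod_{j=0}^{v-1}(2^m-4^{ej})$, yielding
\[z_v=\bigl(2^{n(2k-1-2v)+ev}-1\bigr)\prod_{j=0}^{v-1}(2^m-4^{ej}).\]
The inversion theorem of Section~3 then gives
\[\beta_{m/e-2j}=\sum_{v=j}^{k-1}(-1)^{v-j}q^{\binom{v-j}{2}}\binom{v}{j}_q\,\frac{z_v}{\prod_{i=0}^{v-1}(q^v-q^i)},\]
and pulling the common factor $4^{e\binom{v}{2}}$ out of both $\prod_i(q^v-q^i)$ and $\prod_j(2^m-4^{ej})$ identifies the surviving ratio as $\binom{m/2e}{v}_{4^e}$, producing the stated formula.

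The principal difficulty is the character-sum step. The crucial observation is that the coefficient of $a_j$ in $\sum_\ell B_{\vec a}(x_\ell,z_\ell)$ is literally the quantity set to zero by the $j$-th defining equation of $V_{s,u}$, so the vanishing locus of the $\vec a$-linear functional coincides exactly with $V_{k-1,i}$. Once this identification is secured, the remainder of the argument assembles cleanly from Theorems~\ref{rankbound} and~\ref{numberofsolutions3} together with the inversion of the symmetric van der Monte matrix.
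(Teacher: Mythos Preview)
Your proposal is correct and follows essentially the same route as the paper: both compute the moments $\sum_j 4^{eij}\beta_{m/e-2j}$ by orthogonality of characters (your $|{\rm Rad}(B_{\vec a})|^i$ is just $2^{-mi}$ times the paper's $(\sum_{x,y}(-1)^{{\rm Tr}\,B_{\vec a}(x,y)})^i$), identify the vanishing locus of the resulting linear functional in $\vec a$ with $V_{k-1,i}$, and then invert via the symmetric van der Monte formula of Section~3 together with Theorem~\ref{numberofsolutions3}. The only cosmetic difference is that you package the two contributions into a single $z_v$ before the final M\"obius step, whereas the paper treats the $|V_{k-1,i}|$-term and the $2^{mi}$-term separately; the underlying computation is identical.
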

\begin{proof}
By the orthogonality of characters, we have
\[
\sum_{{\vec a}\in\mathbb{F}_{2^{n}}\times\mathbb{F}_{2^{m}}^{k}}\big(\sum_{x,y\in{\mathbb F}_{2^m}}(-1)^{{\rm Tr}_{{\mathbb F}_{2^e}/{\mathbb F}_2}(B_{\vec a}(x,y))}\big)^u=2^{n(2k+1)}|V_{k-1,u}|,~0\leq u\leq k-1.
\]
Applying the identity
\[\sum_{x,y\in{\mathbb F}_{2^m}}(-1)^{{\rm Tr}_{{\mathbb F}_{2^e}/{\mathbb F}_2}(B_{\vec a}(x,y))}=2^{2m-e\cdot{\rm rk}(B_{\vec a})},
\]
we arrive at \[\sum_{2\mid r=0}^{m/e}\beta_r2^{u(2m-er)}=2^{n(2k-1)}|V_{k-1,u}|-2^{2mu},~0\leq u\leq k-1.\]
Applying Theorem \ref{rankbound},
we arrive at \[\sum_{2\mid r=m/e-2(k-1)}^{m/e}\beta_r2^{u(2m-er)}=2^{n(2k-1)}|V_{k-1,u}|-2^{2mu},~0\leq u\leq k-1.\]
That is,
\[\sum_{i=0}^{k-1}\beta_{\frac{m}{e}-2i}4^{eiu}=2^{n(2k-1-2u)}|V_{k-1,u}|-2^{mu},~0\leq u\leq k-1.\]
Applying the inversion formula of a symmetric van der Monte matrix, we arrive at\[\beta_{\frac{m}{e}-2j}
=\sum_{v=j}^{k-1}(-1)^{v-j}4^{e\binom{v-j}{2}}\binom{v}{j}_{4^e}\prod_{i=0}^{v-1}(4^{ev}-4^{ei})^{-1}
\sum_{i=0}^v(-1)^{v-i}4^{e\binom{v-i}{2}}\binom{v}{i}_{4^e}y_i,\]
where
\[y_i=2^{n(2k-1-2i)}|V_{k-1,i}|-2^{im}.\]
The contribution of $-2^{im}$ to $\beta_{\frac{m}{e}-2j}$ is
\[\begin{split}
&-\sum_{v=j}^{k-1}(-1)^{v-j}4^{e\binom{v-j}{2}}\binom{v}{j}_{4^e}\prod_{i=0}^{v-1}(4^{ev}-4^{ei})^{-1}
\sum_{i=0}^v(-1)^{v-i}4^{e\binom{v-i}{2}}\binom{v}{i}_{4^e}2^{im}\\
=&-\sum_{v=j}^{k-1}(-1)^{v-j}4^{e\binom{v-j}{2}}\binom{v}{j}_{4^e}
\prod_{i=0}^{v-1}(2^{m}-4^{ei})(4^{ev}-4^{ei})^{-1}\\
=&-\sum_{v=j}^{k-1}(-1)^{v-j}4^{e\binom{v-j}{2}}\binom{v}{j}_{4^e}
\binom{\frac{m}{2e}}{v}_{4^e}.\end{split}\]
By Theorem \ref{numberofsolutions3}, the contribution of $2^{n(2k-1-2i)}|V_{k-1,i}|$ to $\beta_{\frac{m}{e}-2j}$ is equal to
\[2^{n(2k-1)}\sum_{v=j}^{k-1}(-1)^{v-j}4^{e\binom{v-j}{2}}\binom{v}{j}_{4^e}2^{(e-m)v}.\]
\end{proof}
We now prove Theorem \ref{balanced}.
We have
\[\begin{split}&\#\{c\in C\mid ~{\rm DC}(c)=-1\}\\
=&2^{n(2k+1)}-1-
\sum_{j=0}^{k-1}2^{m-2ej}
\sum_{v=j}^{k-1}(-1)^{v-j}4^{e\binom{v-j}{2}}\binom{v}{j}_{4^e}
\binom{\frac{m}{2e}}{v}_{4^e}(2^{n(2k-1-2v)+ev}-1)\\
=&2^{n(2k+1)}-1-
\sum_{v=0}^{k-1}\binom{\frac{m}{2e}}{v}_{4^e}(2^{n(2k-1-2v)+ev}-1)\sum_{j=0}^{v}2^{m-2ej}
(-1)^{v-j}4^{e\binom{v-j}{2}}\binom{v}{j}_{4^e}
\\
=&2^{n(2k+1)}-1-
\sum_{v=0}^{k-1}(-1)^v\binom{\frac{m}{2e}}{v}_{4^e}(2^{n(2k-1-2v)+ev}-1)2^{m-2ev}\prod_{j=1}^{v}
(4^{ej}-1)\\
=&2^{n(2k+1)}-1-
\sum_{v=0}^{k-1}(-1)^v(2^{n(2k-1-2v)+ev}-1)2^{m-ev(v+1)}\prod_{j=0}^{v-1}(2^m-4^{ej})\\
\approx &2^{n(2k+1)}\sum_{v=1}^{k-1}(-1)^{v-1}2^{-ev^2}.\end{split}\]
Theorem \ref{balanced} is proved.
\paragraph{}
{\bf Acknowledgement.} The author thanks Kai-Uwe Schmidt for telling him some background of this subject.

\end{document}